\newtheoremstyle{plainrep}{}{}
  {\itshape}{}
  {\bfseries}{.}{ }%
  {\thmnote{{#3}}}
\theoremstyle{plainrep}
\newtheorem{repetition}{}[section]
\theoremstyle{plain}
	\edef\recallable@temp{%
		\unexpanded{\expandafter\gdef\csname recallable@#1\endcsname}{\unexpanded\expandafter{\BODY}}%
	}%
\newcommand{\recall}[1]{
	\renewcommand{\label}[1]{}%
	\csname recallable@#1\endcsname%
}
\newcommand{\recallthm}[1]{
	\begin{repetition}[\Cref{#1} (Restatement)]%
	\recall{#1}%
	\end{repetition} %
}
\newcommand{\NP}{\mathrm{NP}}
\newcommand{\classP}{\mathrm{P}}
\newcommand{\TSPN}{\textsf{TSPN}\xspace}
\newcommand{\TSP}{\textsf{TSP}\xspace}
\newcommand{\GST}{\textsf{GST}\xspace}
\newcommand{\STGST}{\textsf{STGST}\xspace}
\DeclareMathOperator{\E}{E}
\DeclareMathOperator{\Prob}{Pr}
\DeclareMathOperator{\OPT}{OPT}
\DeclareMathOperator{\height}{height}
\newcommand{\Reals}{\mathbb{R}}
\newcommand{\dist}{\mathrm{dist}}
\newcommand{\cL}{\mathcal{L}}
\newcommand{\cC}{\mathcal{C}}
\newcommand{\cY}{\mathcal{Y}}
\newcommand{\an}{\sphericalangle}
\newcommand{\cost}{cost}
\newcommand{\etal}{\emph{et~al.}}
\newcommand{\eps}{\varepsilon}
\DeclarePairedDelimiter\set{\{}{\}}
\DeclarePairedDelimiter\paren{(}{)}
\DeclarePairedDelimiter\brac{[}{]}
\title{On the Approximability of the Traveling Salesman Problem with Line 
  Neighborhoods}
\titlerunning{On the Approximability of TSP with Line 
  Neighborhoods} 
\author{Antonios Antoniadis}{University of Cologne, Germany}{antoniadis@cs.uni-koeln.de}{}{Work done in part while the
    author was at Saarland University and Max Planck Institute for Informatics and supported by DFG grant AN 1262/1-1.}
\author{S\'andor Kisfaludi-Bak}{Max Planck Institute for Informatics, Saarbr\"ucken, Germany}{sandor.kisfaludi-bak@mpi-inf.mpg.de}{}{}
\author{Bundit Laekhanukit}{Shanghai University of Finance and Economics, Shanghai, China}{bundit@sufe.edu.cn}{}{}
\author{Daniel Vaz}{Operations Research Group, TU Munich, Germany}{daniel.vaz@tum.de}{}{This work has
    been supported by the Alexander von Humboldt Foundation with funds
    from the German Federal Ministry of Education and Research
    (BMBF). Work done in part while the author was at Saarland
    University and Max Planck Institute for Informatics.}
\authorrunning{A. Antoniadis, S. Kisfaludi-Bak, B. Laekhanukit, D. Vaz} 
\keywords{Traveling Salesman with neighborhoods, Group Steiner Tree, Geometric approximation algorithms} 
\begin{document}
\maketitle
\nolinenumbers
\begin{abstract}
We study the variant of the Euclidean Traveling Salesman problem where
instead of a set of points, we are given a set of  lines as input, and the goal is to find the shortest tour 
that visits each line. The best known upper and lower bounds for the
problem in $\Reals^d$, with $d\ge 3$,
are $\NP$-hardness and an $O(\log^3 n)$-approximation algorithm which
is based on a reduction to the group Steiner tree problem. 

We show that TSP with lines in $\Reals^d$ is APX-hard for any $d\ge 3$. More generally, this implies that TSP with $k$-dimensional flats does not admit a PTAS for any $1\le k \leq d-2$ unless $\classP=\NP$, which gives a complete classification of the approximability of these problems, as there are known PTASes for $k=0$ (i.e., points) and $k=d-1$ (hyperplanes).
We are able to give a stronger inapproximability factor for $d=O(\log n)$ by
showing that TSP with lines does not admit a $(2-\eps)$-approximation
in $d$ dimensions under the Unique Games Conjecture. 
On the positive side, we leverage recent results on
restricted variants of the group Steiner tree problem in order to give
an $O(\log^2 n)$-approximation algorithm for the problem, albeit with
a running time of $n^{O(\log\log n)}$.

\end{abstract}


\section{Introduction}

In the Euclidean Traveling Salesman problem, one is given $n$ points
in $d$-dimensional Euclidean space (denoted by $\Reals^d$), and the
goal is to find the shortest tour visiting all the points. The problem
is $\NP$-hard for $d\geq 2$~\cite{Papadimitriou77}, but it has a
celebrated polynomial time approximation
scheme (PTAS), i.e., a polynomial-time algorithm that produces a tour of length at most
    $(1+\eps)$ times the optimum for any fixed $\eps>0$, due to
  Arora~\cite{Arora98} and (independently) by
  Mitchell~\cite{Mitchell99}. The running time was later improved by
  Rao and Smith~\cite{RaoSmith98}. 

In the past decades, a {considerable amount} of work has concentrated on finding
approximations for variants and generalizations of the Euclidean
Traveling Salesman Problem, e.g., by changing the underlying space~\cite{AroraGKKW98,KrauthgamerL06,DemaineHK11,BartalGK16}, or the objects being visited~\cite{TSPNvaryingSizeAPXhardness, CorridorConnectionProblem,
	TSPNdoublingMetrics, DumitrescuM03, sodaFatRegions,
	Mitchell2010-pairwisedisjoint-connected, Jia2019geometric}. In the latter case
      which is known as the \emph{Traveling Salesman Problem with
      Neighborhoods (TSPN)}, the input consists of $n$
      \emph{neighborhoods}, and the goal is to find the shortest tour
      that visits each neighborhood. More formally, we are given the
      sets $S_1,\dots,S_n\subset \Reals^d$, and we wish to compute the
      shortest closed curve $\tau$ such that for each
      $i\in\{1,\dots,n\}$ we have $S_i \cap \tau \neq \emptyset$. (Observe that the optimum curve $\tau$ consists of
      at most $n$ segments.) 

In contrast to regular TSP, \TSPN is already
APX-hard in the Euclidean plane~\cite{TSPNvaryingSizeAPXhardness}, 
i.e., it has no PTAS unless $\classP = \NP$.
Worse still, even the basic case in which each neighborhood is an arbitrary finite set of points in the
Euclidean plane (the so called \emph{Group TSP}) admits no
polynomial-time~${O(1)}$-approximation (unless P~${=}$
NP)~\cite{TSPNcomplexity}.
Even in the case in which each
neighborhood consists of exactly two points~\cite{TSPNpointpairAPX} 
the problem remains APX-hard. 

This inherent hardness of \TSPN gives rise to studying variants of the
problem in which the neighborhoods are restricted in some ways. In a
seminal paper, Arkin and Hassin~\cite{ArkinHassinTSPN} looked into the
problem for various cases of \emph{bounded neighborhoods}, including
translates of convex regions and parallel unit segments, and gave
constant-factor approximation algorithms for them. The best known
approximation algorithm for a more general case of bounded neighborhoods
in the plane is due to Mata and
Mitchell~\cite{DBLP:conf/compgeom/MataM97} and attains an~${O(\log
  n)}$ approximation factor. However, there exist special cases of such bounded
neighborhoods in the plane that do allow for $O(1)$-approximation
algorithms. These include neighborhoods which are disjoint, fat, or
have comparable sizes~\cite{TSPNvaryingSizeAPXhardness, CorridorConnectionProblem,
	TSPNdoublingMetrics, DumitrescuM03, sodaFatRegions,
	Mitchell2010-pairwisedisjoint-connected}.

The complementary case of \TSPN in which \emph{neighborhoods} are \emph{unbounded} regions (which is also the
focus of this paper) is, in general, less well understood. Consider neighborhoods
      that are affine subspaces (\emph{flats}) of dimension $k<d$ in
      $\Reals^d$.  On the positive side, and despite the APX-hardness of the general \TSPN problem already
      in $\Reals^2$, the version with flats (in this case lines) as
      neighborhoods can be solved exactly in $O(n^4\log n)$-time via a
      reduction to the shortest watchman route
      problem~\cite{Jonsson02,watchman03}.  Furthermore, Dumitrescu~\cite{TSPNlinesRays} provides
      a~${1.28}$-approximation algorithm that runs in linear time. In $\Reals^3$, the problem
      of line and plane neighborhoods was first raised by Dumitrescu
      and Mitchell~\cite{DumitrescuM03}. For the line case, they
      already point out that the problem is $\NP$-hard as a direct
      consequence of the $\NP$-hardness of Euclidean TSP in the
      plane~\cite{Papadimitriou77}.  
      Although this leaves the possibility for a PTAS open, the best
      known approximation algorithm  to date for \TSPN with line neighborhoods in $\Reals^3$ was given by
      Dumitrescu and T\'oth~\cite{DumitrescuT16} and achieves an
      $O(\log^3 n)$-approximation.   For the case of
      $(d-1)$-dimensional flats in $\Reals^d$ (which also
      includes planes in
      $\Reals^3$), they give a linear-time (for any constant dimension~$d$ and any
constant~${\eps>0}$) ${(1+\eps)2^{d-1}/\sqrt{d}}$-approximation. This result was subsequently
improved by Antoniadis~\etal~\cite{Antoniadis0HS19} to an EPTAS that
also runs in linear time for fixed $d$ and $\eps$. Whether this variant is $\NP$-hard or not
      remains an interesting open problem. As for the case of
      line neighborhoods in $\Reals^d$ for $d\ge 3$, a PTAS for
      $k$-dimensional flats for $1\le k \leq d-2$ also remained out of reach.

We show that unless $\classP=\NP$, there is no PTAS
for lines in $\Reals^3$. As a direct consequence, we can rule out the
existence of a PTAS in all remaining open cases of \TSPN with flats: there
is no PTAS for $k$-dimensional flat neighborhoods for any $1\leq k
\leq d-2$, unless $\classP=\NP$. 

Let us call the Euclidean TSP problem in $\Reals^d$ with $k$-dimensional flat neighborhoods $(k,d)$-\TSPN.
Although ruling out a PTAS for $(1,3)$-\TSPN is an important step
towards settling the approximability of the problem, the
inapproximability factor obtained is very close to $1$. It would be
desirable to obtain a stronger inapproximability factor, especially given how far we are from any
constant-approximation algorithm for the problem. A natural way to
obtain such a stronger inapproximability result is to consider the
problem in higher dimensional spaces. For example, regarding the
classic Euclidean TSP, it is known that the problem becomes APX-hard for $d=\log
n$~\cite{Trevisan00}. This result directly implies that \TSPN  with
line neighborhoods in $\Reals^{1+\log n}$ is APX-hard, but this is
barely satisfactory, since it again only gives a small
inapproximability factor. However, by using a different reduction from
the vertex cover problem, we are able to show that the problem has no polynomial $(2-\eps)$-approximation in $\Reals^{O(\log n)}$ for any fixed $\eps>0$ under the Unique Games Conjecture~\cite{Khot02a}.

On the algorithmic side, very little is known about
$(k,d)$-\TSPN. For $d=3$, the best known polynomial time
approximation for $(1,3)$-\TSPN is the aforementioned $O(\log^3
n)$-approximation algorithm due to Dumitrescu and
T\'oth~\cite{DumitrescuT16}. Their approach is to discretize the problem by
selecting a polynomial number of ``relevant'' points on each line. It
is shown that restricting the solution to visiting lines at these
points only increases the tour length by a constant factor. The
resulting instance can now be seen as an instance of
group-\TSP, where the relevant points of each line form a
group. By feeding this into the $O(\log^3 n)$-approximation algorithm for general group
Steiner tree~\cite{GargKR00,FakcharoenpholRT04} (it is easy to go
from the tree solution to a tour by doubling each edge), they obtain
the same asymptotic approximation factor for \TSPN with line
neighborhoods. This is somewhat unsatisfactory, since it ignores that 
the group Steiner tree instances constructed by the reduction are (i)
Euclidean and (ii) all the points of a group are collinear. In other
words, although the constructed group Steiner tree instances are
highly restricted, there is no known technique to exploit 
this restriction.

However,  the reduction from \TSPN with line neighborhoods
to the group Steiner tree problem implies that, if we allow quasi-polynomial running time, then \TSPN with line neighborhoods admits an approximation ratio of $O(\log^2n/\log\log n)$ in $O(n^{\log^2 n})$-time due to the result of Chekuri and P\'al \cite{ChekuriP05}.
We would like to point out that this approximation ratio is tight for the class of quasi-polynomial time algorithms due to the recent work of Grandoni, Laekhanukit and Li \cite{GrandoniLL19}, which holds under the {\em Projection Game Conjecture} and $\mathrm{NP}\not\subseteq\bigcup_{\epsilon>0}\mathrm{BPTIME}\bigl(2^{n^{\epsilon}}\bigr)$. 
Their hardness result is built on the seminal work of Halperin and Krauthgamer \cite{HalperinK03}, %
who prove that group Steiner tree admits no $\log^{2-\epsilon}
n$-approximation for any fixed $\epsilon > 0$, unless $\NP \subseteq
\mathrm{ZTIME}\bigl(n^{\operatorname{polylog}(n)}\bigr)$.

For the class of polynomial-time approximation algorithms, the group
Steiner tree problem admits an approximation ratio of
$O(\log^2 n)$ on some special cases, e.g., trees \cite{GargKR00} and
bounded treewidth graphs \cite{ChalermsookDLV17,ChalermsookDELV18}. %
It is still open whether the group Steiner tree problem in general graphs
admits a polynomial-time $O(\log^2n)$-approximation algorithm; the best running time to obtain an $O(\log^2n)$-approximation is $n^{O(\log n)}$ \cite{ChekuriP05}.

The connection between \TSPN and group Steiner tree also holds in the reverse direction: Given an instance of group Steiner tree, one may embed the input metric into a Euclidean space with distortion $O(\log n)$~\cite{Bourgain85} and cast it as \TSPN with ``set neighborhoods''.

While we cannot improve the approximation factor in polynomial time,
we can do so in quasi-po\-ly\-nomial time: we give an $O(\log^2
n)$-approximation in $n^{O(\log\log n)}$ time. %
We obtain this result by using Arora's PTAS for \TSP~\cite{Arora98},
together with the framework of Chalermsook et
al.~\cite{ChalermsookDLV17,ChalermsookDELV18}, to transform the \TSPN problem
into a variant of group Steiner tree when the input graph is a tree, and then employing an
$O(\log^2n)$-approximation algorithm.

\subsection{Our Contribution}

Our first contribution is to show that unlike the problem with hyperplane neighborhoods, the problem with line neighborhoods is APX-hard.

\begin{theorem}\label{thm:noptas}
The \TSPN problem for lines in $\Reals^3$ is APX-hard. More specifically, it has no polynomial time $(1+\frac{1}{230000})$-approximation unless $\classP=\NP$.
\end{theorem}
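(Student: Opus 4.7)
The plan is to establish APX-hardness through a gap-preserving reduction from a known APX-hard combinatorial problem. A natural starting point is a bounded-degree variant of \TSP with distances in $\{1,2\}$ (i.e., \TSP$(1,2)$ on cubic or near-cubic graphs), which enjoys an explicit inapproximability gap, or equivalently a bounded-occurrence graph Hamiltonicity problem. The overall architecture is to realize the vertices of the source instance as small geometric \emph{vertex gadgets}, each consisting of a few carefully oriented lines in $\Reals^3$, and to realize edges as \emph{connector lines} that can be picked up essentially ``for free'' when the tour traverses between the two endpoints of the edge.

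I would proceed in three steps. First, position the vertex gadgets at designated locations in $\Reals^3$, spaced so that the straight-line distance between two gadgets reflects the underlying $\{1,2\}$ edge cost. Each gadget is designed to force any tour visiting all its lines to pay a fixed baseline ``overhead,'' while admitting a bounded collection of natural entry and exit directions corresponding to possible transitions to other gadgets. Second, for each edge $(u,v)$ of the source graph, insert a small bundle of connector lines lying along the segment between the corresponding gadgets, arranged so that a tour moving directly between $u$ and $v$ intersects all of them incidentally, but a tour skipping this direct move has to detour to reach them. Third, verify that a Hamiltonian-style solution of the source instance induces a \TSPN tour of length essentially equal to $n$ plus the total vertex-gadget overhead, while any tour corresponding to a ``bad'' source solution is stretched by at least the source problem's gap.

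The main obstacle is the geometric gadget design. Two issues have to be addressed simultaneously: (i) because tours are continuous and lines are infinite, one must rule out unexpected shortcuts, such as sweeping lines from several distant gadgets with a single cleverly placed segment in $\Reals^3$; this will require exploiting the full three-dimensional freedom, in particular using \emph{skew} line configurations inside each gadget to prevent planar shortcuts and to ensure that the entry/exit behavior is essentially one-dimensional; and (ii) the source-problem gap must be preserved quantitatively. The explicit constant $1+\tfrac{1}{230000}$ is then simply the outcome of careful bookkeeping: it reflects the ratio between the per-vertex overhead cost and the additional Euclidean length forced by each violated edge or missed connector, composed with the known APX-hardness gap of the source problem.

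A secondary but necessary step is a normalization argument showing that, up to a small loss, an optimal \TSPN tour can be assumed to pass through each vertex gadget via one of a bounded discrete set of ``canonical'' local configurations. This discretization permits the tour length to be compared directly to a combinatorial \TSP cost on the source graph, making the gap transfer rigorous. I expect the bulk of the technical work to lie in this normalization together with proving the no-shortcut property in (i), since these are precisely the places where the continuous geometry of $\Reals^3$ can a priori break a discrete reduction.
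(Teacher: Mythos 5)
Your proposal takes a genuinely different route from the paper (you reduce from bounded-occurrence \TSP{}$(1,2)$ / Hamiltonicity with vertex gadgets and edge connector lines, whereas the paper reduces from Vertex Cover on tripartite graphs), but there is a central gap that I do not think can be patched along the lines you sketch.

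The core difficulty, which you acknowledge under item (i) but do not resolve, is that lines in $\Reals^3$ are unbounded, so ``a small bundle of connector lines lying along the segment between two gadgets'' does not exist as an object: any such line extends to infinity and can be hit from arbitrarily far away. Likewise, a vertex gadget made of ``a few carefully oriented lines'' cannot by itself force a tour to pay a \emph{local} overhead near the gadget's intended location, because each of those lines can be visited anywhere along its infinite extent. To localize a tour you need a mechanism that makes it provably expensive to miss a small ball, and this is precisely where a quantitative argument is required rather than an appeal to ``skew configurations.'' The paper's solution is a dedicated \emph{point gadget} consisting of $\Theta(n^6)$ lines all passing through a single point $q$: outside the small ball around $q$ these lines spread out and become pairwise far apart, so any tour avoiding the ball must have length at least a fixed constant. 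Nothing in your plan plays this role, and with only ``a few'' lines per gadget no comparable bound is available; a single well-placed segment far from the gadgets can sweep many lines at once.

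A second, structural issue is that reducing from a Hamiltonian/\TSP{}$(1,2)$-style problem forces you to control the \emph{order} in which gadgets are visited and the \emph{travel cost between} them, which is extremely delicate once shortcuts through infinite lines are possible. The paper deliberately avoids this by reducing from Vertex Cover on tripartite graphs: the lines \emph{are} the graph edges, vertices are intersection points of those lines placed on three skew cube edges, and a fixed closed base curve $\gamma$ (enforced by point gadgets along it) predetermines the global shape of any near-optimal tour. The only remaining decision is a local ``detour or not'' at each potential vertex point, which maps cleanly to ``include this vertex in the cover or not.'' The quantitative control then rests on a lemma guaranteeing a minimum distance of order $1/n$ between lines of non-incident graph edges, obtained by a careful skew-line computation plus a flattening transformation. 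Your proposal inverts the roles (gadgets for vertices, lines for edges) and leaves the key normalization and no-shortcut lemmas as open subgoals; those are exactly the parts where, in my judgement, the plan would break down.
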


The reduction is from the vertex cover problem on tripartite graphs. %
The idea is to represent the graph edges with lines, where two lines intersect
if and only if they correspond to incident edges. %
The main challenge is to keep the pairwise distance between non-intersecting
lines large enough. %
We solve this by carefully placing the intersection points on non-adjacent edges of a cube. %
For technical reasons, we do not work directly with this placement, but rather
on a ``flattened'' version of this point set. %
Additionally, we want to restrict the optimal tour so that it visits each line
near one of its intersection points with other lines. %
This is achieved by forcing the optimal tour to follow a certain closed curve using special point gadgets
(each consists of polynomially many lines), and to visit the lines representing the edges only at (or close to) intersection points. Visiting an intersection point corresponds to including the corresponding vertex in the vertex cover of the graph.
As a direct consequence of \Cref{thm:noptas}, we obtain the following.

\begin{corollary}\label{cor:kflats-apxhard}
The Euclidean TSP problem with $k$-dimensional flat neighborhoods in $\Reals^d$ is APX-hard for all $1\leq k \leq d-2$.
\end{corollary}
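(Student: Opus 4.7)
The plan is to give an approximation-preserving reduction from $(1,3)$-\TSPN to $(k,d)$-\TSPN that keeps the optimum value exactly equal. Identify $\Reals^3$ with the first three coordinates of $\Reals^d$, and fix any linear subspace $V \subseteq \set{0}^3 \times \Reals^{d-3}$ of dimension $k-1$; such a $V$ exists precisely because $k - 1 \le d - 3$, which is exactly the hypothesis $k \le d - 2$. Given an input instance $L_1, \dots, L_n$ of $(1,3)$-\TSPN, define the $k$-flat $F_i := L_i + V$ in $\Reals^d$ for each $i$. The direction of $L_i$ lies in the first three coordinates while $V$ lies in the orthogonal complement, so these directions are linearly independent and $F_i$ is genuinely a $k$-dimensional flat.

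Next, verify that the optimum cost is preserved in both directions. For the upper bound, any tour $\tau$ of the $L_i$ in $\Reals^3$ becomes, after embedding in $\Reals^d$, a tour of the same length that visits every $F_i$, since $L_i \subseteq F_i$. For the lower bound, let $\tau^*$ be any tour in $\Reals^d$ visiting every $F_i$, and let $\pi:\Reals^d \to \Reals^3$ be the projection onto the first three coordinates. Because $\pi$ is $1$-Lipschitz, $\pi(\tau^*)$ is a closed curve of length at most that of $\tau^*$; and because $\pi(F_i) = L_i$ for every $i$, any point witnessing $\tau^* \cap F_i \ne \emptyset$ projects to a point witnessing $\pi(\tau^*) \cap L_i \ne \emptyset$, so the projected curve visits every $L_i$.

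Combining these two directions, $\OPT_{(k,d)}$ on the constructed instance equals $\OPT_{(1,3)}$ on the input, and an $\alpha$-approximate solution for the former projects to an $\alpha$-approximate solution for the latter. The reduction is trivially polynomial, so \Cref{thm:noptas} transfers verbatim and rules out a polynomial-time $(1 + 1/230000)$-approximation for $(k,d)$-\TSPN unless $\classP = \NP$. There is no real obstacle in the argument; the only point requiring care is that the auxiliary subspace $V$ fits inside the orthogonal complement of $\Reals^3$ in $\Reals^d$, which is precisely the content of the condition $k \le d-2$.
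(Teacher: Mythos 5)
Your proof is correct and matches the paper's argument in substance: you build the $k$-flats by taking the orthogonal product of each line with a fixed $(k-1)$-dimensional subspace disjoint from the ambient $\Reals^3$, and you recover a $\Reals^3$-tour by orthogonal projection, which is $1$-Lipschitz and maps each flat back onto its line. The paper phrases this as $\ell \times \Reals^{k-1}$ inside $\Reals^{k+2} \subseteq \Reals^d$, which is the same construction; your write-up is simply a bit more explicit about why $\pi(F_i) = L_i$ and why each $F_i$ has dimension exactly $k$.
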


To prove \Cref{cor:kflats-apxhard}, suppose we are given a set $\cL$ of lines in $\Reals^3$. We can first change each line $\ell\in \cL$ into the flat $\ell\times\Reals^{k-1}$, resulting in $k$-dimensional flats in $\Reals^{k+2}$. Since $k\leq d-2$, we have that $\Reals^{k+2}$ is a subspace of $\Reals^d$, so this is a valid construction for $(k,d)$-\TSPN. Moreover, any tour in $\Reals^3$ visiting the lines is also a valid tour of the $k$-flats, and a valid tour of the $k$-flats can be projected into a valid tour of $\cL$ in $\Reals^3$ of less or equal length.

Our second contribution is to show a larger inapproximability factor in higher dimensions under the Unique Games Conjecture:

\begin{theorem}
\label{thm:tspn-hardness-line}
For any $\varepsilon>0$, there exists a constant $c$ such that there is no $(2-\varepsilon)$-approximation
algorithm for \TSPN with line neighborhoods in $\Reals^{c\cdot\log n}$, unless the Unique Games Conjecture is false. 
Moreover, for any $\varepsilon>0$, there is a constant $c$ such that it is
$\NP$-hard to give a $(\sqrt{2}-\varepsilon)$-approximation for \TSPN with
line neighborhoods in $\Reals^{c\cdot\log n}$.
\end{theorem}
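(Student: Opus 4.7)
The plan is to reduce from \textsc{Minimum Vertex Cover}. The two statements of the theorem will follow from the two known hardness results for \textsc{Min-VC}: the $(2-\eps)$-hardness of Khot--Regev under the Unique Games Conjecture, and the $(\sqrt{2}-\eps)$-\NP-hardness of Khot--Minzer--Safra via the $2$-to-$2$ Games Theorem. I aim for an approximation-preserving reduction: given an $n$-vertex graph $G=(V,E)$, produce a \TSPN instance with $O(n^2)$ line neighborhoods in $\Reals^{c\log n}$ whose optimum is $(1\pm o(1))\cdot D\cdot\OPT_{\mathrm{VC}}(G)$ for a suitable scale $D>0$.

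Construction: embed each vertex $v$ as a point $p_v\in\Reals^{c\log n}$ so that the set $\set{p_v:v\in V}$ is nearly equidistant at scale $D$. A probabilistic construction suffices --- draw each $p_v$ uniformly from $\set{-1,+1}^{c\log n}$, and by Chernoff plus a union bound over the $\binom{n}{2}$ pairs, for $c=c(\eps)$ large enough and with positive probability all pairwise distances $\|p_u-p_v\|$ lie in $[(1-\delta)D,(1+\delta)D]$ for arbitrarily small $\delta>0$; a deterministic alternative is any explicit binary code of sufficient minimum distance. For each edge $e=uv\in E$, add the line $\ell_e$ through $p_u$ and $p_v$ to the \TSPN instance.

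The completeness direction is immediate: any vertex cover $C\subseteq V$ of size $k$ yields a tour of length at most $k(1+\delta)D$ that visits the $k$ points $\set{p_v:v\in C}$ in any cyclic order and hits every $\ell_e$. For soundness, given a tour $T$ of length $L$ I would pick a visit point $q_e\in T\cap\ell_e$ for every edge, parameterize $q_e=(1-t_e)p_u+t_ep_v$, set $V(q_e)$ to be the closer endpoint, and take $C:=\set{V(q_e):e\in E}$, which is clearly a vertex cover. The goal is then $|C|\le L/D\cdot(1+o(1))$. I would argue this by a two-step shortcut: (i) any visit with $t_e\notin[0,1]$ is replaced by the nearer endpoint of $[p_u,p_v]$ without increasing $|T|$, using that in the almost-equidistant high-dimensional configuration the extensions of $\ell_e$ diverge from every other edge-line; (ii) for each $v\in C$ one then selects a representative $q_{e(v)}$ within $\eta D$ of $p_v$ (after a further shortcut of cost $o(L)$) so that the $|C|$ representatives have pairwise distance at least $(1-2\eta-\delta)D$ and hence $L\ge(|C|-1)(1-2\eta-\delta)D$.

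The main obstacle is step (ii), because of the \emph{midpoint shortcut}: a visit at the midpoint of $[p_u,p_v]$ covers $\ell_e$ just as well as one at an endpoint, and the midpoints of two disjoint edges lie at distance $D/\sqrt{2}$ rather than $D$. For a matching of $|E|$ disjoint edges this yields a tour of cost $|E|D/\sqrt{2}$ and costs the naive soundness argument a factor of $\sqrt{2}$ --- precisely what the \NP-hardness statement tolerates, since feeding Khot--Minzer--Safra's $(\sqrt{2}-\eps)$-hard instances through the reduction still yields $(\sqrt{2}-\eps)$-hard \TSPN instances. To retain the full factor of $2$ for the UGC-based statement, the reduction is applied to the hard \textsc{Min-VC} instances produced by the Khot--Regev construction, which are dense enough that $|E|\gg\OPT_{\mathrm{VC}}$, so any tour visiting a significant number of midpoints is already much longer than the vertex-based one and step (ii) goes through without loss. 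Putting everything together, an $\alpha$-approximation for \TSPN with lines in $\Reals^{c\log n}$ yields an $\alpha(1+o(1))$-approximation for the corresponding hard \textsc{Min-VC} instances, contradicting either the UGC or $\classP\ne\NP$ as required.
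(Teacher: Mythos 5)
Your high-level plan (reduce from vertex cover, embed vertices as a nearly equidistant point set in $\Reals^{c\log n}$ via JL-type arguments or codes, add a line per edge) matches the paper, and you correctly identify the ``midpoint shortcut'' as the central obstacle: disjoint-edge midpoints in an $(1\pm\delta)$-equidistant configuration sit at distance $\approx D/\sqrt{2}$, so a matching can be covered at roughly a $\sqrt{2}$ discount over any vertex-visiting tour. But the two fixes you propose for this obstacle are not sound, and the paper uses a different, essential ingredient that you omit.

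First, the arithmetic of the NP-hardness claim doesn't work. If the soundness argument only shows $\OPT_{\TSPN}\ge \OPT_{\mathrm{VC}}\cdot D/\sqrt{2}$, then an $\alpha$-approximation for \TSPN yields only an $\alpha\sqrt{2}$-approximation for vertex cover, not $\alpha(1+o(1))$. Starting from the Khot--Minzer--Safra $(\sqrt{2}-\eps)$-hardness, this would establish at most $(2^{1/4}-\eps')$-hardness for \TSPN, which is strictly weaker than the claimed $(\sqrt{2}-\eps)$. Second, the appeal to ``density'' of the Khot--Regev instances to rescue the UGC bound is not a correct argument: a large edge-to-cover ratio in $G$ does not by itself prevent a tour from covering many lines cheaply at interior points. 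For example, density does not obviously control the geometry of how disjoint sub-matchings can be shortcut, and the Khot--Regev reduction is not designed to provide any such geometric guarantee. What actually forces a near-optimal tour to behave like a vertex cover is a structural property you have to \emph{build into} the instance, not extract from the hardness source.

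The paper's construction does exactly this via a blow-up: it first takes the lexicographic product of $G$ with an independent set of size $\alpha=n^2$, replacing each vertex $v$ with a cloud $Q_v$ of $\alpha$ copies and each edge $uv$ with a complete bipartite graph on $Q_u\times Q_v$, and only then applies the near-equidistant embedding to the $\alpha n$ points of $G'$. Now each original edge contributes $\alpha^2$ lines, but all of them pass through the $\alpha$ points of $Q_u$ (or of $Q_v$). A tour can still cover a line at an interior point, but by the separation lemma (\Cref{lem:distance_point_line}) each such ``off-vertex'' hit is pairwise $\Delta/2$-separated from every other line, so a tour of cost $x$ covers at most $2x/\Delta = O(n^3)$ lines outside balls around the $p_v$'s (\Cref{lem:covering_at_balls}). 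Since the blow-up produces $\Omega(n^4)$ lines per uncovered edge (between empty balls), almost all lines must be covered inside balls, and the fraction-threshold rounding (\Cref{lemma:non-empty-balls}) recovers a genuine vertex cover of the right size without losing a $\sqrt{2}$. This blow-up step is the missing key idea in your proposal; without it, the midpoint shortcut is not neutralized for either hardness source, and the claimed factors do not follow.
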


This reduction is from the general vertex cover problem. Again we 
represent the edges of the graph with lines and the vertices
correspond to intersection points. This time however the intersection
points are almost equidistant: they are obtained via the
Johnson-Lindenstrauss lemma applied on an $n$-simplex. 
This allows the tour to visit the intersection points in any order. %
To obtain a direct correspondence with vertex cover, we need to ensure that
lines are visited near intersection points. %
To this end, we blow up the underlying graph by replacing each edge by a
complete bipartite graph. %
Thus, we get the following corollary of \Cref{thm:tspn-hardness-line}.

\begin{corollary}
For any $\eps>0$ there is a number $c=c(\eps)$ such that the Euclidean TSP problem with $k$-dimensional flat neighborhoods in $\Reals^d$ has no polynomial $(2-\eps)$-approximation for any $k\in\{1,\dots,d-c\log n\}$, unless the Unique Games Conjecture is false.
\end{corollary}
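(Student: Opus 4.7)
The plan is to mimic the proof of \Cref{cor:kflats-apxhard} (which lifts the $(1,3)$-\TSPN hardness to all $(k,d)$-\TSPN with $1\le k\le d-2$), but starting from \Cref{thm:tspn-hardness-line} instead of \Cref{thm:noptas}. Concretely, given $\eps>0$, let $c=c(\eps)$ be the constant from \Cref{thm:tspn-hardness-line}, so that \TSPN with $n$ line neighborhoods in $\Reals^{c\log n}$ admits no polynomial $(2-\eps)$-approximation under UGC.

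The reduction is the ``cylindrification'' trick. Given an instance $\cL=\{\ell_1,\dots,\ell_n\}$ of line \TSPN in $\Reals^{c\log n}$, and target parameters $k,d$ with $1\le k\le d-c\log n$, identify $\Reals^{c\log n}$ with the subspace spanned by the first $c\log n$ coordinates of $\Reals^d$, and define $F_i := \ell_i \times \Reals^{k-1} \subset \Reals^{c\log n + k - 1} \subseteq \Reals^d$, where the factor $\Reals^{k-1}$ is spanned by the next $k-1$ coordinates. Each $F_i$ is a $k$-dimensional affine flat in $\Reals^d$, so $\{F_1,\dots,F_n\}$ is a valid instance of $(k,d)$-\TSPN, and the dimensional constraint $c\log n + k - 1 \le d$ is satisfied by the assumption $k\le d-c\log n$.

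The key observation is that the two instances have the same optimum. On the one hand, any closed tour $\tau$ that visits every $\ell_i$ in $\Reals^{c\log n}$ lives inside $\Reals^d$ as well and visits every $F_i$ (since $\ell_i\subset F_i$), with the same length. On the other hand, let $\pi:\Reals^d \to \Reals^{c\log n}$ be the orthogonal projection onto the first $c\log n$ coordinates; $\pi$ is $1$-Lipschitz, and $\pi(F_i)=\ell_i$. Hence if $\tau'$ is any tour of the flats $F_i$ in $\Reals^d$, then $\pi(\tau')$ is a closed curve in $\Reals^{c\log n}$ visiting every $\ell_i$, and it has length at most that of $\tau'$. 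Therefore $\OPT(\{F_i\}) = \OPT(\cL)$, and moreover any tour of the flats can be converted in polynomial time to a tour of the lines of no greater length.

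Consequently, a polynomial $(2-\eps)$-approximation for $(k,d)$-\TSPN would yield one for \TSPN with line neighborhoods in $\Reals^{c\log n}$ by simply running the flat algorithm on the lifted instance and projecting the output via $\pi$. By \Cref{thm:tspn-hardness-line}, this would refute the Unique Games Conjecture, proving the corollary. There is essentially no obstacle beyond bookkeeping the dimensions; the only thing to check is that $c\log n + k - 1 \le d$, which is exactly the hypothesis $k\le d-c\log n$ (with one unit to spare).
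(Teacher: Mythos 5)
Your proof is correct and is exactly the argument the paper intends: it is the same ``cylindrification'' trick used to prove \Cref{cor:kflats-apxhard}, applied to the higher-dimensional instance from \Cref{thm:tspn-hardness-line}, with the $1$-Lipschitz projection showing that the two optima coincide. The paper leaves this proof implicit, but your version matches it step for step, including the dimension bookkeeping $c\log n + k - 1 \le d$.
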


On the positive side, our third contribution is to develop an $O(\log^2 n)$-approximation algorithm with slightly superpolynomial running time. 

\begin{theorem}
\label{thm:tspn-algo}
There is a deterministic $O(\log^2n)$-approximation algorithm for \TSPN with line
neighborhoods in $\Reals^{d}$ that runs in time $n^{O(\log \log n)}$ for any
fixed dimension $d$.
\end{theorem}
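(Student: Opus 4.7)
My plan to prove \Cref{thm:tspn-algo} proceeds in three phases: (i) discretize each input line into polynomially many candidate points at the cost of a constant factor, (ii) use a random Arora-style quadtree with portals to rephrase the problem as a group Steiner tree (\GST) instance on a tree whose leaves correspond to candidate points, and (iii) solve tree-\GST with an $O(\log^2 n)$-approximation and lift the Steiner tree back to a Euclidean tour.

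For phase (i), I would reuse the Dumitrescu--T\'oth discretization, which selects a polynomial set of candidate points on each line $\ell_i$ such that restricting a tour to visits at candidates costs only an $O(1)$ factor over $\OPT$. Each line thus corresponds to a group of collinear candidate points, and the instance becomes a Group-\TSP on $\poly(n)$ points in $\Reals^d$ with $n$ groups. For phase (ii), I would overlay a randomly shifted $d$-dimensional quadtree of depth $O(\log n)$ and install $m$-portals on every cell facet. The Patching/Structure Lemma from Arora's PTAS~\cite{Arora98} guarantees that, with positive probability over the shift, some $(1+\eps)$-approximate tour crosses every cell boundary only at portals. Following the framework of Chalermsook et al.~\cite{ChalermsookDLV17,ChalermsookDELV18}, the quadtree together with its portal sub-states forms a tree (more precisely, a bounded-treewidth structure) $T$ whose leaves represent candidate points, and whose distances faithfully encode the cost of portal-respecting tours. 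Declaring each line's candidate set a group, the resulting \GST instance on $T$ captures the original \TSPN problem up to a $1+\eps$ factor.

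Phase (iii) applies the Garg--Konjevod--Ravi $O(\log^2 n)$-approximation for \GST on trees~\cite{GargKR00}, derandomized via the method of conditional expectations on the natural LP, to obtain a Steiner tree $F$ in $T$ that touches every group. Doubling and shortcutting $F$ in $\Reals^d$ yields a tour of length $O(\log^2 n) \cdot \OPT$, because the doubled tree length upper-bounds the induced Euclidean tour length and phases (i) and (ii) preserve cost up to $O(1)$ and $1+\eps$ factors, respectively. Summing the blow-ups across all three phases yields the claimed $O(\log^2 n)$ ratio.

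The main obstacle, as I see it, is phase (ii): Arora's PTAS is crafted for point-to-point \TSP, so one must build a two-sided bridge between \TSPN tours and \GST solutions on $T$. In one direction, a portal-respecting near-optimal tour must be shown to induce a low-cost Steiner connection in $T$ whose every group is hit; in the other, a Steiner tree of cost $C$ in $T$ must be convertible into a Euclidean tour of cost $O(C)$ that visits all original lines. The $n^{O(\log\log n)}$ running time arises from the dynamic program over portal configurations in the Chalermsook et al.\ machinery, whose state grows with both the quadtree depth $O(\log n)$ and the per-cell portal count; tuning $m = O(\log n)$ so that portalization loses only a $1+o(1)$ factor leaves a state space of size $n^{O(\log\log n)}$, which is where the final running time originates.
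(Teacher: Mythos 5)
Your phases (i) and (ii) start on the right track—the paper also begins with the Dumitrescu–T\'oth discretization and then builds on Arora's quadtree dynamic program—but phase (iii) has a genuine gap. The structure obtained by combining the quadtree with ``portal sub-states'' is not a tree: each cell has many possible DP states (portal multisets with pairings), and a child cell's state can be reached from several parent states, so the DP structure is a DAG. More importantly, the problem you need to solve on this DAG is \emph{not} standard group Steiner tree. The paper reduces to what it calls \STGST: you must find a \emph{solution tree}, which is an arborescence satisfying degree constraints—every non-leaf subproblem node must pick exactly \emph{one} combination node (out-degree 1), while every combination node must include \emph{all} of its children. These constraints are precisely what make a subtree of the DAG assemble back into a valid $(m,r)$-light salesman tour. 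Running Garg–Konjevod–Ravi directly on a tree unrolling of this DAG would return an arbitrary Steiner subtree hitting every group; nothing forces that subtree to be a solution tree, so you could end up selecting incompatible DP states for a cell and its children, and the result would not correspond to any Euclidean tour. The paper instead invokes the rounding theorem of Chalermsook et al.\ (\Cref{thm:super:approx:round}), which is tailored to \STGST and samples solution trees directly; its running time $\Delta(H)^{O(\height(H))}$—the DAG's out-degree to the power of its height—is what yields $n^{O(\log\log n)}$, not a state-space count over portal configurations as you describe.

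There are two smaller omissions worth flagging. First, Arora's DP must be modified so that leaf cells carry a \emph{visit bit}: because a \TSPN solution need not touch every discretized point, the per-leaf subproblem must distinguish whether its point is required to be visited, and the groups $S_i$ of the \STGST instance consist of leaf subproblem nodes whose visit bit is \strue and whose point lies in $P_i$. Without this, the reduction cannot express partial coverage and simply reduces to plain \TSP. Second, a single invocation of the rounding only covers each group with probability $\Omega(1/\log N)$, so the algorithm must repeat $O(\log n \log N)$ times, union the resulting tours, and patch any still-uncovered neighborhoods with direct detours; this repetition is where the final $O(\log^2 n)$ ratio is actually accumulated, and your doubling-and-shortcutting step alone does not account for it.
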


The algorithm is based on adapting the dynamic program by Arora~\cite{Arora98}, and
reformulating \TSPN into the problem of finding a solution in the dynamic programming
space that visits all the line neighborhoods. %
We then build upon the techniques of Chalermsook et
al.~\cite{ChalermsookDLV17,ChalermsookDELV18}, and show that this task can be reduced to a
variant of the group Steiner tree problem 
that admits an 
$O(\log^2n)$-approximation in slightly superpolynomial running time.
The $O(\log \log n)$-factor in the exponent of the running time is a
consequence of the running time of Arora's algorithm, and it is
possible that we can improve it to polynomial time if an appropriate
EPTAS for \TSP with running time $O(f(\varepsilon,
d)n\log n )$ is discovered.

All missing proofs can be found in the appendix, as well as a short conclusion in Section~\ref{sec:conclusion}.


\section{Inapproximability in 3 dimensions.}\label{sec:noptas}

The goal of this section is to prove \Cref{thm:noptas}.
The overall setup of our construction is inspired by a reduction in
Elbassioni et al.~\cite{ElbassioniFS09} for the planar problem with
segment neighborhoods. Our reduction is from vertex cover on
$3$-partite graphs (i.e., on graphs $G$ where the vertices can be
partitioned into three independent sets $V_1,V_2$ and $V_3$). 
It is $\NP$-hard to decide whether a given instance has a vertex cover of size $n/2$ or if all vertex covers have size at least $\frac{34}{33}\frac{n}{2}$~\cite{ClementiCR99}.

In our construction, each vertex $v$ of $G$ is assigned to some point $p_v$ on some edge of a unit cube; the classes $V_1,V_2,V_3$ are mapped to pairwise non-adjacent and non-parallel (i.e., skew) cube edges. For each edge $uv\in E(G)$, we add the line $p_up_v$; see Fig.~\ref{fig:overview}.

\begin{figure}
\centering
\includegraphics[width=0.75\textwidth]{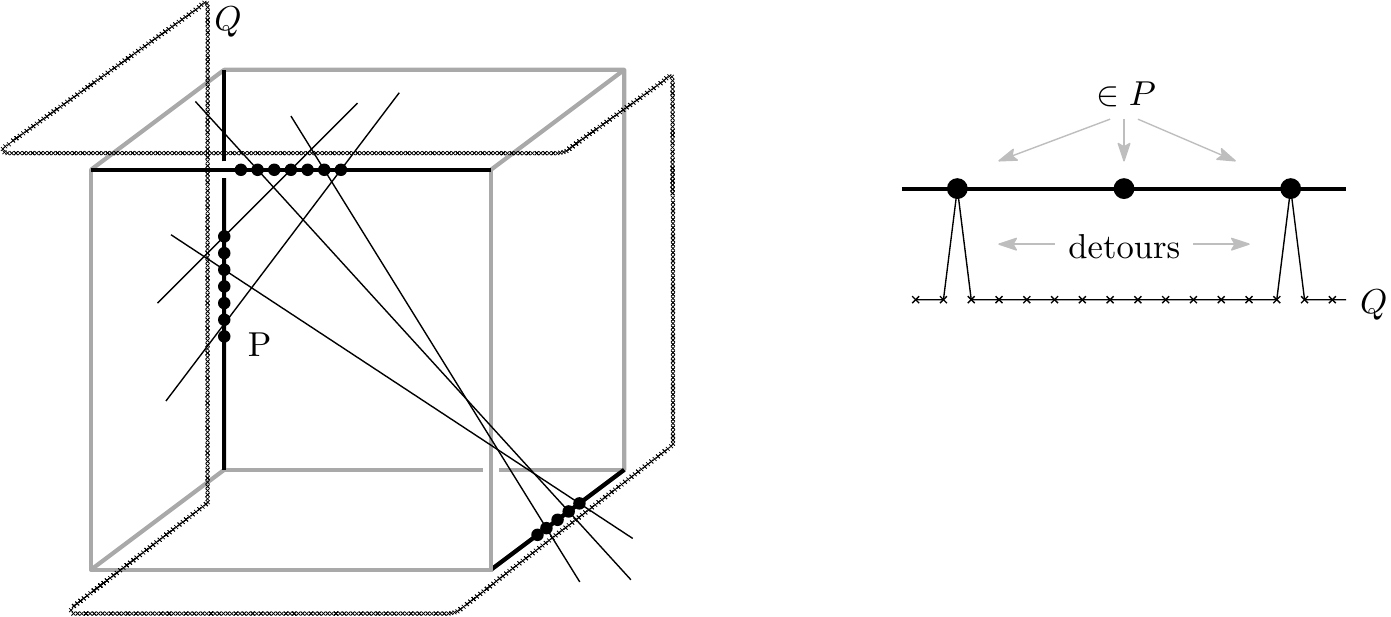}
\caption{Left: Overview of a basic construction with a cube. Right: The optimal tour must visit all points of $Q$, and it makes detours to some points $p_v$.}\label{fig:overview}
\end{figure}

Consider now a closed curve $\gamma$ of length 10 which is disjoint
form the cube, but follows some edges of the cube at a distance $c/n$
for some constant $c$. Let $Q$ be a set of points along $\gamma$ such
that any two consecutive points have distance $c/(10n)$. 

We define a special \emph{point gadget}---which consists of a large collection of lines---at each point $q\in Q$. This ensures that any TSP tour that has length at most $20$ will touch an infinitesimally small ball around each vertex of $Q$. Consequently, any not too long TSP tour will have to ``trace'' $\gamma$. The points in $P$ which are placed near the cube edges are arranged so that one can visit each point $p_v$ with a short detour from $\gamma$ of length $c/n$. Given a vertex cover of size $k$ in $G$ one can create a TSP tour of length at most $10+kc/n$, namely by folowing $\gamma$ and making the short detour at $p_v$ if and only if $v$ is in the vertex cover. Conversely, by a careful arrangement of the lines and point gadgets, we can ensure that a tour of length $10+kc/n$ implies the existence of a vertex cover of size at most $1.011 k$.

For technical reasons, we need to transform the constructed cube to a very flat parallelepiped; it is convenient to define the point set $Q$ and the point gadgets only after this flattening transformation takes place. We are now ready to define our construction.

\subsection{The construction}

Let $G=(V,E)$ be a tripartite graph on $n$ vertices with partition classes $V_1,V_2,V_3$. We add dummy vertices (without any incident edges) to G so that each class has $n$ vertices; the vertices of $V_a\, (a=1,2,3)$ are denoted by $v^a_1,\dots,v^a_n$. Notice that the addition of dummy vertices does not change the set of vertex covers of $G$. Let $\cC$ denote the unit cube $[0,1]^3$, and let $e^1,e^2,e^3$ be the unit segments $(0,0,1)^T(1,0,1)^T$, $(1,0,0)^T(1,1,0)^T$ and $(0,1,0)^T(0,1,1)^T$ respectively. We assign each vertex $v^a_i$ to a point on the middle third of $e^i$. The assignment is denoted by $p$, and defined as:
\[p(v^a_i)=
\begin{cases}
(\frac{n+i}{3n},0,1)^T & \text{if $a=1$}\\
(1,\frac{n+i}{3n},0)^T & \text{if $a=2$}\\
(0,1,\frac{n+i}{3n})^T & \text{if $a=3$}.
\end{cases}\]
We denote by $P=p(V(G))$ the set of points created this way.
For each edge $uv\in E(G)$, let $\ell(uv)$ be the line through $p(u)$ and $p(v)$, and let $\cL$ be the set of lines created this way: $\cL=\{\ell(uv) \mid uv\in E(G)\}$. The following technical lemma plays a key role in the contruction.

\begin{lemma}\label{lem:cube_skew_dist}
\begin{recallable}{lem:cube_skew_dist}
	If $\ell,\ell'\in \cL$ correspond to non-incident edges, then they are disjoint and their distance is at least~$\frac{1}{20n}$.
\end{recallable}
\end{lemma}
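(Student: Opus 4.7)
My plan is to apply the standard skew-lines distance formula
\[
d(\ell,\ell') \;=\; \frac{\bigl|(p-p')\cdot(w\times w')\bigr|}{|w\times w'|},
\]
with $p\in\ell$, $p'\in\ell'$ and $w,w'$ direction vectors, and to split into cases by how the two non-incident edges distribute among the partition classes. Since $G$ is tripartite and any two of $\{V_1,V_2\},\{V_1,V_3\},\{V_2,V_3\}$ share exactly one class, up to relabeling only two configurations occur: Case~(i), both edges lie between the same pair (say $V_1,V_2$); and Case~(ii), the edges lie between two distinct pairs sharing exactly one class (say $V_1$--$V_2$ and $V_1$--$V_3$).

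For Case~(i), denote the endpoints on $e^1$ by $(x,0,1)$ and $(x',0,1)$, and those on $e^2$ by $(1,y,0)$ and $(1,y',0)$, with all parameters in $[\tfrac13,\tfrac23]$. Non-incidence forces $|x-x'|=a/(3n)$ and $|y-y'|=b/(3n)$ for integers $a,b\geq 1$. A direct computation yields
\[
w\times w'=\bigl(y'-y,\;x'-x,\;(1-x)(y'-y)+y(x'-x)\bigr),
\]
and the scalar triple product with the displacement $(x-x',0,0)$ collapses to $(x-x')(y'-y)$, giving numerator $ab/(9n^2)$. Bounding the third coordinate of $w\times w'$ by $\tfrac23(|x-x'|+|y-y'|)$ yields $|w\times w'|^2\leq 17(a^2+b^2)/(81n^2)$; combining with the numerator and using $ab/\sqrt{a^2+b^2}\geq 1/\sqrt{2}$ for positive integers $a,b$ then produces $d(\ell,\ell')\geq 1/(\sqrt{34}\,n)>1/(20n)$.

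For Case~(ii), the two vertices in the shared class differ by at least $1/(3n)$ along the corresponding cube edge, while the direction vectors now point along genuinely non-parallel cube edges. Every coordinate of $w\times w'$ is therefore of constant order; I would compute explicitly that its coordinate along the shared cube-edge axis has magnitude at least $\tfrac{5}{9}$ on the parameter range $[\tfrac13,\tfrac23]$. Taking $p-p'$ to be the (single-coordinate) displacement along the shared cube edge, the numerator is at least $\tfrac{5}{27n}$, while a coordinate-wise estimate gives $|w\times w'|\leq 2$, yielding $d(\ell,\ell')\geq 5/(54n)>1/(20n)$. In both cases the numerator is strictly positive, which simultaneously shows that $\ell$ and $\ell'$ are disjoint and indeed skew.

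The main obstacle is Case~(i): the two lines are nearly parallel, so the naive estimate $|w\times w'|\leq |w||w'|$ would only yield an $\Omega(1/n^2)$ distance lower bound. Recovering the correct $\Theta(1/n)$ behavior requires the observation that every coordinate of the cross product is itself $O(1/n)$, and then careful constant-tracking to bring the final bound above $1/(20n)$.
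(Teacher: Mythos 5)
Your proof is correct and takes essentially the same approach as the paper: reduce (by relabeling) to the two cases of non-incident edges sharing the same pair of classes or a single class, then apply the skew-lines distance formula and track the constants. The paper bounds the problematic third coordinate of the cross product in the nearly-parallel case using a polarization identity ($v_1w_2 - v_2w_1 = \tfrac12\bigl[(v_1-w_1)(v_2+w_2) - (v_1+w_1)(v_2-w_2)\bigr]$) followed by the harmonic-mean inequality, whereas you bound it coordinate-wise and use $ab/\sqrt{a^2+b^2}\geq 1/\sqrt{2}$; both give comfortable margins above $\frac{1}{20n}$.
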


\subparagraph*{Flattening.}

Due to technical reasons that will become clear in the proof later on,
we need to transform the above construction so that the angle of each
line $\ell$ with the plane $x+y+z=0$ is at most some small
constant. Practically, we transform the point set $P$ and the line set
$\cL$ with the linear transformation $x \mapsto Ax$, where $A = I -
0.3J$ and $J$ is the all-ones matrix.

Essentially, the transformation pushes everything closer to the plane $H:x+y+z=0$: for a given point $p$ and its perpendicular projection $q$ on $H$, the point $Ap$ is the point on the segment $pq$ for which $\dist(q,Ap)=\frac{1}{10}\dist(q,p)$.
Note that if $pq$ is any segment of length $\lambda$, then its length after the transformation is at least $\lambda/10$ and at most $\lambda$. When the transformation is applied to an edge $e^a$ of the cube $\cC$, then the resulting segment has length $\sigma=\sqrt{0.7^2 + 0.3^2 + 0.3^2}\simeq 0.8185$. Consequently, $Ap(v^a_i)$ and $Ap(v^a_{i+1})$ has distance $\sigma/(3n)$.

Let $\bar{P}$ and $\bar{\cL}$ be the resulting point set and line set. Using \Cref{lem:cube_skew_dist} and the above arguments we get the following corollary.

\begin{corollary}\label{cor:mindist}
The minimum distance between points of $\bar{P}$ is $\frac{\sigma}{3n}$, and the minimum distance between lines of $\bar{\cL}$ corresponding to non-incident edges of $G$ is at least $\frac{1}{200n}$.
\end{corollary}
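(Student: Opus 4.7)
The plan is to reduce both parts of the corollary to a single spectral fact about the linear map $A = I - 0.3J$, namely that its smallest singular value is $0.1$. This follows because $J = \mathbf{1}\mathbf{1}^T$ has eigenvalues $3,0,0$ with the all-ones vector as eigenvector for the nonzero one, so $A$ is symmetric with eigenvalues $0.1, 1, 1$, the direction of maximal contraction being the normal $(1,1,1)^T/\sqrt{3}$ to $H$. Hence $\|Ax\| \in [0.1\|x\|,\|x\|]$ for every $x\in\Reals^3$, in agreement with the geometric picture given just above the corollary statement.

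For the line distance claim, given $\ell,\ell'\in\cL$ that correspond to non-incident edges of $G$, \Cref{lem:cube_skew_dist} yields $\dist(\ell,\ell')\ge 1/(20n)$. Since $A$ is a bijection on $\Reals^3$, the images $A\ell$ and $A\ell'$ are lines; applying the spectral bound to every difference vector gives
\[\dist(A\ell, A\ell') \;=\; \inf_{p\in\ell,\, q\in\ell'} \|A(p-q)\| \;\ge\; 0.1\cdot \dist(\ell,\ell') \;\ge\; \frac{1}{200n}.\]

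For the point-distance claim, the intra-edge case is already handled in the paragraph preceding the corollary: consecutive points on the same image edge have distance exactly $\sigma/(3n)$, and non-consecutive points on the same image edge (still collinear) are integer multiples of this spacing apart. For two points $\bar p,\bar q\in\bar P$ lying on distinct image edges, I would compute the preimage difference vector coordinate-wise using the definition of $p$. For each of the three pairs of edges $(e^1,e^2), (e^1,e^3), (e^2,e^3)$, the cross-edge difference vector contains one coordinate equal to $\pm 1$ and the other two bounded in $[-1,1]$, so $\|p-q\|\ge 1$ in $P$ and therefore $\|\bar p - \bar q\|\ge 0.1$ in $\bar P$, which comfortably exceeds $\sigma/(3n)$ once $n$ is at least a small constant (any $n\ge 4$ works, and the corollary is only meaningful for large $n$ in the reduction).

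The whole proof is essentially mechanical, and there is no real obstacle. The one step that needs mild attention is the coordinate inspection confirming $\|p-q\|\ge 1$ for every cross-edge pair; the rest is entirely driven by the spectral bound on $A$, which transports both the point spacing and the line separation from the pre-flattened construction to $\bar P$ and $\bar\cL$ with only a constant-factor loss.
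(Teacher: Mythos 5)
Your proof is correct and follows the same route the paper implicitly takes: the paper observes that the flattening $A$ contracts every segment by a factor in $[1/10,1]$, and then invokes \Cref{lem:cube_skew_dist} together with the computed consecutive-point spacing $\sigma/(3n)$, exactly as you do. You have simply made explicit what the paper leaves implicit — the spectral bound $\|Ax\|\ge 0.1\|x\|$, the passage of this bound through the infimum defining line-to-line distance, and the coordinate check that cross-edge point pairs in $P$ are at distance $\ge 1$ — but these are filled-in details, not a different argument.
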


\subparagraph*{Defining the point gadgets, and wrapping up the construction.}

For a point set $X$, let $\bar{X}$ denote its image under the flattening transformation $A$.
Let $F^a_1$ and $F^a_2$ be the planes of the faces of $[0,1]^3$ incident to $e^a$. The following claim shows that $\bar{F}^a_1$ and $\bar{F}^a_2$ are two planes through $\bar{e}^a$ whose angle is small.

\begin{figure}
\centering
\includegraphics[width=\textwidth,trim={0 0 37pt 0},clip]{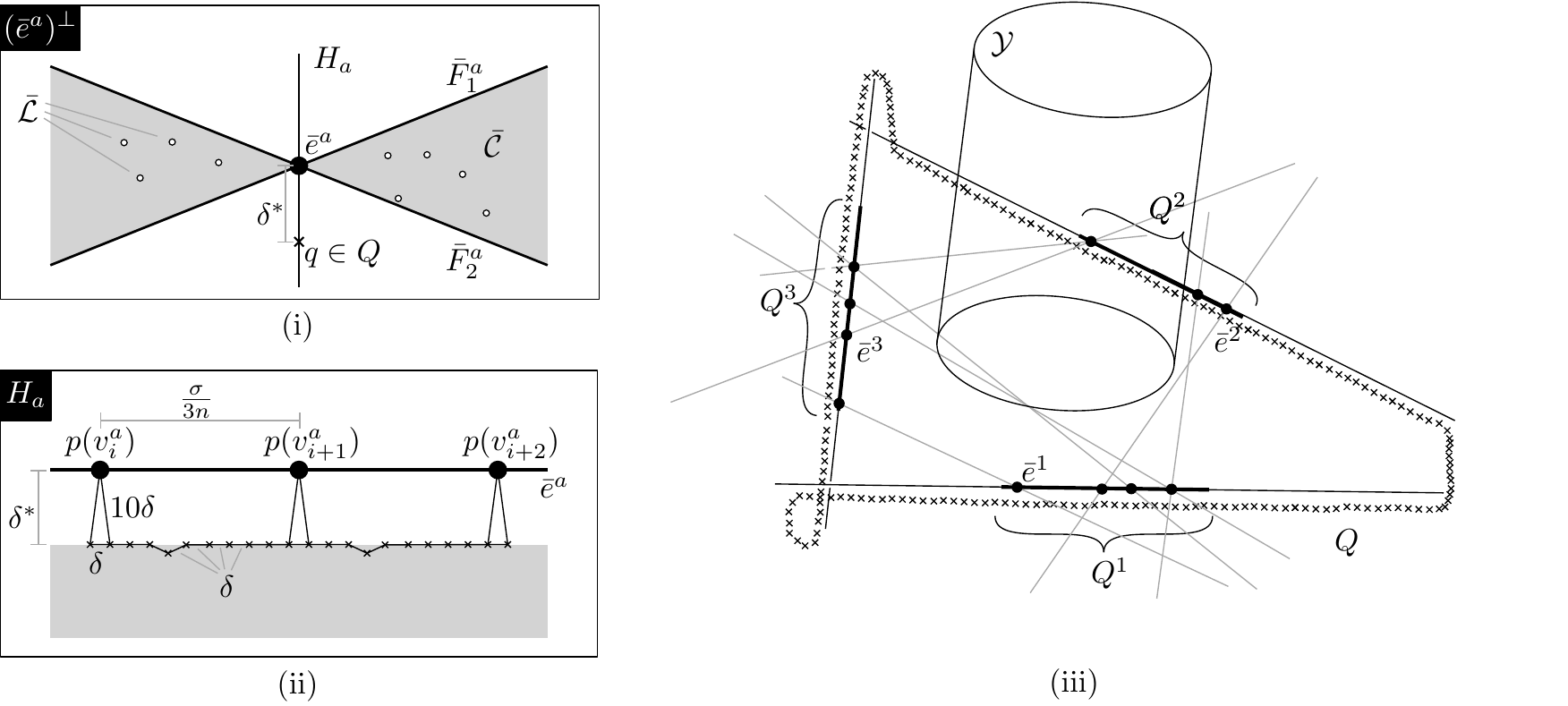}
\caption{(i) Cross-section given by a plane perpendicular to $\bar{e^a}$. (The segment $\bar{e^a}$ appears as a point, and the plane $H_a$ as a line in this picture.) All lines of $\bar{\cL}$ intersect such a plane in the gray area. (ii) Defining $Q^a$ within the plane $H_a$, so that all points have distance at least $\delta^*$ from $\bar{e^a}$. (iii) Defining $Q$ so that it has all the required properties. The cylinder $\cY$ is perpendicular to the plane $x+y+z=0$, a plane 	to which all points of the construction are close to. The ``triangle'' defined by the skew lines $\bar{e}^a$ wraps around the cylinder $\cY$.}\label{fig:Qdef}
\end{figure}

\begin{claim}\label{cl:small_plane_angle}
\begin{recallable}{cl:small_plane_angle}
For $a=1,2,3$ we have $\an(\bar{F}^a_1,\bar{F}^a_2)<\frac{1}{4}$.
\end{recallable}
\end{claim}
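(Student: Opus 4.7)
The plan is to exploit the fact that the flattening matrix $A = I - 0.3J$ commutes with every coordinate permutation. Consequently, cyclically permuting coordinates sends $e^1 \to e^2 \to e^3$ together with their pairs of incident cube faces, so the dihedral angle $\an(\bar F^a_1, \bar F^a_2)$ is independent of $a$, and it suffices to prove the claim for $a=1$.

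For $a=1$, the segment $e^1$ is incident to the cube faces $F^1_1 = \{z=1\}$ and $F^1_2 = \{y=0\}$. I would choose $\{(1,0,0)^T, (0,1,0)^T\}$ as a basis of the direction space of $F^1_1$ and $\{(1,0,0)^T, (0,0,1)^T\}$ as a basis of that of $F^1_2$, and push each pair through $A$. A direct calculation yields $A(1,0,0)^T = (0.7, -0.3, -0.3)^T$, $A(0,1,0)^T = (-0.3, 0.7, -0.3)^T$, and $A(0,0,1)^T = (-0.3, -0.3, 0.7)^T$, which span the direction spaces of $\bar F^1_1$ and $\bar F^1_2$, respectively.

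The next step is to form a normal vector to each flattened plane via the cross product of the corresponding spanning pair. An elementary computation produces normals proportional to $(3,3,4)^T$ for $\bar F^1_1$ and $(3,4,3)^T$ for $\bar F^1_2$. Both have squared length $34$, and their inner product is $9+12+12 = 33$, so the cosine of the acute angle between them equals $33/34$. Hence $\an(\bar F^1_1, \bar F^1_2) = \arccos(33/34)$, and it remains to verify $\arccos(33/34) < 1/4$, i.e., $\cos(1/4) < 33/34$. This follows from the alternating Taylor bound $\cos x < 1 - x^2/2 + x^4/24$ at $x = 1/4$, which gives $\cos(1/4) < 5953/6144 < 33/34$.

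The whole argument is elementary; the only care needed is to extract the acute dihedral angle from the cross products (whose signs could otherwise produce the supplementary angle) and to verify the final numerical comparison between $\cos(1/4)$ and $33/34$. Neither is a genuine obstacle, so the main merit of the proof is really the symmetry observation, which collapses the three cases into one and makes the explicit computation a trivial matter.
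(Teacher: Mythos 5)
Your proof is correct and follows essentially the same route as the paper's: reduce to $a=1$ by the symmetry of $A=I-0.3J$ under coordinate permutation, compute normals to the two flattened faces via cross products of spanning vectors, obtain $\cos\an(\bar F^1_1,\bar F^1_2)=33/34$, and conclude $\arccos(33/34)<1/4$. The only cosmetic differences are that you span the face directions with $A$-images of coordinate unit vectors (the paper uses $A$-images of three vertices of each face), and you supply an explicit Taylor bound for the final numerical inequality that the paper just asserts.
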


Let $H^a$ be the angle bisector plane of $\bar{F}^a_1$ and $\bar{F}^a_2$ which does not intersect the image of $\cC$, see Figure~\ref{fig:Qdef}(i). Within $H^a$, we place a set of points $Q^a$, which we define next.

Let $\delta=\frac{1}{4000n}$, and let $\delta^*$ be the height of the isoceles triangle $T_\delta$ with base $\delta$ and two sides of length $10\delta$, that is $\delta^*=\sqrt{99.75}\delta$. Consider a half-plane in $H_a$ whose boundary is parallel to $\bar{e}^a$ and is at distance $\delta^*$ from it. Within this half-plane, let $Q_a$ be a set of at most $4000n$ points with the following properties: (i) for each $p(v^a_i)$ there are two points $q,q'\in Q^a$ such that $p(v^a_i)$, $q$ and $q'$ form an isoceles triangle of side lengths $10\delta,10\delta,\delta$ and (ii) there is a unique shortest TSP path of $Q_a$, whose edges are of length exactly $\delta$; see Figure~\ref{fig:Qdef}(ii).

Let $Q$ be a point set with the following properties:
\begin{itemize}[noitemsep]
\item $Q^1\cup Q^2\cup Q^3 \subseteq Q$
\item For any pair of distinct points $q,q'\in Q$, $\dist(q,q')\geq \delta$.
\item Each segment of the minimum TSP tour $T(Q)$ of $Q$ has length $\delta$, and $\cost(T(Q))=10$.
\item The minimum distance of points of $Q$ from $\bar{\cL}$ is attained only in $Q^1\cup Q^2\cup Q^3$
\item $Q$ is disjoint form the cylinder $\cY$ of axis $(0,0,0)^T(1,1,1)^T$ and radius $\sigma/2$.
\end{itemize}

Such a set $Q$ is easy to find, for example by following the lines $\bar{e}^a$ and connecting them far from the origin. See Fig.~\ref{fig:Qdef}(iii) for an illustration.

We need the following claim on the distance of $Q$ from the lines in $\bar{\cL}$. %
Intuitively, it shows that the points in $Q$ are far from the lines in $\bar\cL$, and thus a certain detour is necessary to visit a line in $\bar\cL$.
Note that the bound would not be strong enough without the flattening.

\begin{claim}\label{cl:qdist}
\begin{recallable}{cl:qdist}
For any $q\in Q$ and $\ell\in \bar{\cL}$ we have $\dist(q,\ell)> 9.9\delta$.
\end{recallable}
\end{claim}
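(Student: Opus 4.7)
My plan is to reduce to the case $q \in Q^1 \cup Q^2 \cup Q^3$ by invoking the fourth defining property of $Q$, which asserts that $\min_{q\in Q} \dist(q, \bar\cL)$ is attained in $Q^1 \cup Q^2 \cup Q^3$, and then to work in a local coordinate frame adapted to the corresponding cube edge. Fix $a = 1$ WLOG and put $\bar e^1$ along the $z$-axis. By \Cref{cl:small_plane_angle}, the two flattened faces $\bar F^1_1, \bar F^1_2$ meet at $\bar e^1$ with dihedral angle $\alpha < 1/4$, so we may arrange for the flattened cube $A\cC$ to sit inside the thin wedge $W = \{(x,y,z) : x > 0,\ |y| \leq x\tan(\alpha/2)\}$. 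With this choice, $H^1$ becomes the $yz$-plane, so $q = (0, q_y, q_z)$ with $|q_y| \geq \delta^* = \sqrt{99.75}\,\delta$.

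I then split the analysis by whether $\ell$ intersects $\bar e^1$. For a line $\ell \in \bar\cL$ passing through some point $p = (0, 0, z_p) \in \bar e^1$, the other endpoint of the generating segment lies inside $W$, so the unit direction $\hat\ell = (\hat\ell_x, \hat\ell_y, \hat\ell_z)$ (oriented into the cube) satisfies $\hat\ell_x > 0$ and $|\hat\ell_y|/\hat\ell_x \leq \tan(\alpha/2)$. Using the perpendicular-distance formula $\dist(q,\ell)^2 = q_y^2 + s^2 - (q_y\hat\ell_y + s\hat\ell_z)^2$ with $s = q_z - z_p$, and minimizing over $s$, yields
\[
\dist(q, \ell)^2 \;\geq\; \frac{q_y^2 \hat\ell_x^2}{\hat\ell_x^2 + \hat\ell_y^2} \;\geq\; q_y^2 \cos^2(\alpha/2) \;\geq\; 99.75\,\cos^2(1/8)\,\delta^2 \;>\; 9.9^2\,\delta^2,
\]
since numerically $99.75\cos^2(1/8) > 98.2 > 98.01 = 9.9^2$. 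For a line $\ell$ disjoint from $\bar e^1$, it connects points on $\bar e^b, \bar e^c \subset W$ for some $b, c \in \{2,3\}$, and the argument proceeds analogously via the foot of perpendicular $p^*$ from $q$ to $\ell$: when $p^* \in W$, the inequality $|p^*_y| \leq p^*_x \tan(\alpha/2)$ combined with minimizing $p^{*2}_x + (|q_y| - p^*_x \tan(\alpha/2))^2$ over $p^*_x \geq 0$ recovers the same bound $q_y^2 \cos^2(\alpha/2)$; when $p^*$ lies on the extension of $\ell$ outside $W$, the wedge containment of the cube segment of $\ell$ still constrains the direction $\hat\ell$ enough to force the bound.

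The main obstacle is the numerical tightness of the first case: the margin between $\sqrt{99.75}\cos(1/8)\,\delta \approx 9.91\,\delta$ and $9.9\,\delta$ is only about $0.01\,\delta$. This explains the specific constants in the construction---the flattening factor $0.3$ (which keeps $\alpha$ small enough) and the side length $10\delta$ of the isoceles triangles defining $Q^a$ (which makes $\delta^*$ just large enough). Without the flattening, the dihedral angle $\alpha$ would be $\pi/2$, giving $\cos(\alpha/2) = \cos(\pi/4) \approx 0.71$, and the inequality would fail dramatically. The second case, while structurally similar, needs extra geometric care when $p^*$ lies outside the wedge; here one uses that the specific positions of $\bar e^2, \bar e^3$ in the flattened cube impose lower bounds on the $x$-coordinates of the endpoints of $\ell$'s cube segment, preventing the worst-case degeneracies in the projection onto the $xy$-plane.
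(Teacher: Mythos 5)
Your proof is correct and takes essentially the same route as the paper: both reduce to $q\in Q^a$ via the fourth property of $Q$, both split on whether $\ell$ meets $\bar e^1$, and both arrive at the lower bound $\delta^*\cos(\an(\bar F^1_1,\bar F^1_2)/2)$ for the main case. The only stylistic difference is that the paper argues synthetically --- the planes $\bar F^1_1,\bar F^1_2$ separate $q$ from $\ell$, so $\dist(q,\ell)\ge\min(\dist(q,\bar F^1_1),\dist(q,\bar F^1_2))$ --- whereas you reproduce the same bound by writing the point-to-line distance in local coordinates and minimizing over the foot position; your quantity $q_y^2\,\hat\ell_x^2/(\hat\ell_x^2+\hat\ell_y^2)\ge q_y^2\cos^2(\alpha/2)$ is exactly the squared plane-distance in disguise. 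For the remaining case ($\ell$ joining $\bar e^2$ and $\bar e^3$), the paper simply notes the distance is far larger, whereas you try to push the same coordinate computation through; that is more work than needed here, since such lines are at distance $\Omega(1/n)\gg\delta$ from $\bar e^1$ and hence from $q$, and your acknowledgment of unresolved sub-cases there is a small gap, but no worse than the paper's own hand-waving.
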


\begin{lemma}[Point gadget]
\label{lem:point_gadget}
\begin{recallable}{lem:point_gadget}
Given a positive integer $n$ and a point $q\in \Reals^3$, there is a set $\cL$ of $O(n^6)$ lines through $q$ such that any \TSPN tour of $\cL$ which is disjoint from the ball $B(q,\frac{1}{n^3})$ has length at least $20$.
\end{recallable}
\end{lemma}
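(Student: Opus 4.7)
The plan is to let $\cL$ consist of $N=\Theta(n^6)$ lines through $q$ whose direction vectors form a well-separated system on the unit sphere. More precisely, I would choose unit vectors $u_1,\dots,u_N$ such that any two of the $2N$ antipodal points $\{\pm u_i\}\subset S^2$ are at spherical distance at least $\delta=c/n^3$ for a small absolute constant $c>0$. Since a spherical cap of radius $\delta$ on $S^2$ has area $\Theta(\delta^2)$, a standard packing/area argument yields such a configuration with $N=\Theta(n^6)$ lines.

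Now consider any \TSPN tour $\tau$ of $\cL$ disjoint from $B(q,1/n^3)$, and look at the radial projection $\pi\colon\Reals^3\setminus\{q\}\to S^2$ defined by $\pi(x)=(x-q)/\|x-q\|$. For each line $\ell_i\in\cL$, $\tau$ meets $\ell_i$ at some point $p_i$ with $\|p_i-q\|\geq 1/n^3$, and since $\ell_i$ passes through $q$ we have $\pi(p_i)\in\{u_i,-u_i\}$. Thus $\pi(\tau)$ is a closed rectifiable curve on $S^2$ that visits at least one point from each of the $N$ antipodal pairs, and those $N$ visited points are pairwise at spherical distance $\geq\delta$ by construction.

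The proof is then completed by two length inequalities. First, any closed rectifiable curve on $S^2$ that visits $N$ points pairwise at spherical distance $\geq\delta$ has length at least $N\delta$: split the curve at its $N$ visits into $N$ arcs, each of which has length at least the spherical distance between its endpoints. Second, for any closed rectifiable curve $\tau$ disjoint from $q$, $|\tau|\geq R_{\min}\cdot|\pi(\tau)|$ where $R_{\min}=\min_t\|\tau(t)-q\|\geq 1/n^3$; this follows by parametrizing $\tau$ by arc length and decomposing $\tau'(t)$ into components parallel and perpendicular to $\tau(t)-q$, which yields $\|(\pi\circ\tau)'(t)\|\leq 1/\|\tau(t)-q\|$ and hence $|\pi(\tau)|\leq|\tau|/R_{\min}$. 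Chaining these,
\[
|\tau|\;\geq\;R_{\min}\cdot|\pi(\tau)|\;\geq\;\frac{1}{n^3}\cdot N\delta\;=\;\frac{Nc}{n^6},
\]
so choosing $N=\lceil 20\,n^6/c\rceil=O(n^6)$ gives $|\tau|\geq 20$, as required.

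The main nontrivial ingredient I expect is the projection inequality $|\tau|\geq R_{\min}\cdot|\pi(\tau)|$, a short arc-length calculation that depends critically on the hypothesis $R_{\min}\geq 1/n^3$; the remaining steps—existence of the sphere packing, the elementary subdivision-of-$\pi(\tau)$ lower bound, and the constant tuning at the end—are routine, so I do not anticipate any substantial obstacle.
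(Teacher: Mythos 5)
Your proof is correct, and it takes a genuinely different route from the paper's. The paper builds $\cL$ explicitly: it places a $40n^3\times 40n^3$ grid $\Gamma$ of side length $\tfrac{1}{2n^3}$ (cell side $\tfrac{1}{80n^6}$) in a plane at distance $\tfrac{1}{2n^3}$ from $q$, so that $\Gamma\subset B(q,\tfrac{1}{n^3})$, and lets $\cL$ be the lines through $q$ and the grid points. It then argues directly in $\Reals^3$: the truncated lines $\ell\setminus B(q,\tfrac{1}{n^3})$, $\ell\in\cL$, are pairwise at Euclidean distance at least $\tfrac{1}{80n^6}$, so any closed tour visiting all $(40n^3)^2$ of them has length at least $(40n^3)^2\cdot\tfrac{1}{80n^6}=20$. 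Your proof instead passes to the sphere: you take a $\Theta(1/n^3)$-separated packing of $\Theta(n^6)$ antipodal direction pairs, radially project the tour to $S^2$, bound the spherical length of the projected tour from below by the subdivision argument, and transfer this back via the arc-length comparison $|\tau|\ge R_{\min}\cdot|\pi(\tau)|$. Both arguments rest on the same geometric fact---rays from $q$ separate linearly in the distance to $q$---but you have factored it cleanly into a combinatorial packing step on $S^2$ plus a metric comparison lemma, whereas the paper computes the required pairwise distance of the truncated lines directly from the explicit grid. Your version is slightly more abstract but more modular, it avoids the delicate explicit distance estimate that produces the paper's $\tfrac{1}{80n^6}$, and it generalizes immediately to $\Reals^d$ (yielding $O(n^{3(d-1)})$ lines); the paper's version is more elementary and completely self-contained.
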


Our construction is the union of the line set $\bar{\cL}$ together with a point gadget placed at each point $q\in Q$; let $\cL^*$ denote the resulting line set.

\subsection{The Reduction}

\begin{lemma}\label{lem:noptas_main}
\begin{recallable}{lem:noptas_main}
If $G$ has a vertex cover of size $k$, then there is a tour in $\cL^*$ of length $10+19\delta k$. If $\cL^*$ has a tour of length $10+19\delta k$, then $G$ has a vertex cover of size $1.011k$.
\end{recallable}
\end{lemma}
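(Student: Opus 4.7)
The plan is to prove the two directions of the lemma separately, using the structural ingredients already assembled: the minimum-TSP property of $T(Q)$, the distance bound $\dist(q,\ell)>9.9\delta$ of \Cref{cl:qdist}, the point-gadget lemma, the minimum pairwise-distance bound of \Cref{cor:mindist}, and the angle control \Cref{cl:small_plane_angle} provided by the flattening.

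\textbf{Forward direction (VC $\Rightarrow$ tour).} Given a vertex cover $C\subseteq V(G)$ of size $k$, I start from the canonical tour $T(Q)$ (length $10$, every edge of length $\delta$) and insert one short detour per $v\in C$. For each $v\in V_a\cap C$, property~(i) in the construction of $Q^a$ produces two anchor points $q,q'\in Q^a$ such that $p(v),q,q'$ form an isoceles triangle of side lengths $10\delta,10\delta,\delta$. Since $T(Q)$ is the \emph{unique} minimum TSP of $Q$ and all its edges have length $\delta$, the edge $qq'$ lies on $T(Q)$; I replace it by the length-$20\delta$ path $q\to p(v)\to q'$, adding exactly $19\delta$ per chosen vertex. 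The resulting tour still passes through every $q\in Q$, so all point gadgets are visited; and for every $uv\in E(G)$, at least one of $u,v$ is in $C$, so the corresponding detour hits $p(u)$ or $p(v)$ which lies on $\ell(uv)\in\bar\cL$. The total length is $10+19\delta k$.

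\textbf{Backward direction, setup.} Let $\tau$ be a tour of $\cL^*$ with $|\tau|=10+19\delta k\le 11<20$. By \Cref{lem:point_gadget}, $\tau$ enters $B(q,1/n^3)$ for every $q\in Q$; rerouting through the exact centers costs at most $O(|Q|/n^3)=o(\delta)$, so I may assume $\tau$ visits $Q$ exactly. For each line $\ell=\ell(uv)\in\bar\cL$ I fix a visit point $r(\ell)\in\ell\cap\tau$ and let $C$ consist of the endpoints chosen as follows: I add the endpoint among $\{u,v\}$ whose image is closer to $r(\ell)$. By construction $C$ covers every edge of $G$.

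\textbf{Bounding $|C|\le 1.011 k$ and main obstacle.} I use a charging argument. Decompose $\tau$ into maximal sub-arcs $A_1,\dots,A_M$ between consecutive $Q$-visits $q_1,q_2,\dots$; replacing each $A_i$ by the direct segment $q_iq_{i+1}$ yields a closed walk on $Q$ of total length $\sum\dist(q_i,q_{i+1})\ge 10$ (by minimality of $T(Q)$), so
\[|\tau|\;\ge\;10+\sum_{i=1}^{M}\bigl(|A_i|-\dist(q_i,q_{i+1})\bigr).\]
For each $v\in C$ pick a representative line $\ell_v$ (one of the lines assigned to $v$) and let $r_v=r(\ell_v)$; \Cref{cl:qdist} gives $\dist(q,r_v)>9.9\delta$ for every $q\in Q$, so any arc $A_i$ containing $r_v$ satisfies $|A_i|\ge\dist(q_i,r_v)+\dist(r_v,q_{i+1})>19.8\delta$. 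The crux is then to show that in a tour of length at most $11$, the representatives corresponding to distinct $v\in C$ lie in \emph{distinct} sub-arcs with $\dist(q_i,q_{i+1})\le\delta$ (so each contributes extra length at least $18.8\delta$). Combining this with the display gives $19\delta k\ge|\tau|-10\ge 18.8\delta\cdot|C|$, hence $|C|\le\frac{19}{18.8}\,k<1.011\,k$. The main obstacle is converting the closest-endpoint assignment — which only asserts ``$r_v$ lies in the half of $\ell_v$ closer to $p(v)$'' — into the genuine quantitative closeness $\dist(r_v,p(v))=O(\delta)$ needed to invoke the pairwise separation $\Omega(1/n)\gg\delta$ of the points $p(v)$ from \Cref{cor:mindist}. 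This is precisely where the flattening is essential: the angle control of \Cref{cl:small_plane_angle} forces every line of $\bar\cL$ to meet each plane $H^a$ at an angle bounded away from $0$, so a point $r$ on $\ell(uv)$ that is far from both $p(u)$ and $p(v)$ along $\ell$ is correspondingly far from every cube edge and thus far from $Q$, which would blow up the detour cost and violate $|\tau|\le 11$.
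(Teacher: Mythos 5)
Your forward direction matches the paper's: follow $T(Q)$ and, for each vertex in the cover, splice in the length-$20\delta$ detour through $\bar p(v)$ using its two anchor points in $Q^a$, replacing a $\delta$-edge and gaining $19\delta$. That part is fine.

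The backward direction has a genuine gap, which you partially flag yourself but do not close. The paper's argument hinges on the cylinder lemma (\Cref{lem:cylinder}): because the tour is short and $Q$ avoids the cylinder $\cY$, no sub-arc between two consecutive $Q$-visits can touch lines from all three classes; hence each sub-arc $g^j$ has a distinguished cube edge $\bar e^{i}$ to which every visited line is incident. This makes the paper's vertex-cover rule canonical: add the $\bar e^{i}$-endpoint of every line visited in $g^j$. The charging is then purely \emph{per sub-arc}: the first and last hops from/to $Q$ cost at least $9.9\delta$ each by \Cref{cl:qdist}, and each jump between two distinct $\bar e^i$-endpoints within $g^j$ forces a move between \emph{non-incident lines}, costing at least $20\delta$ by the line-pair bound in \Cref{cor:mindist}. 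Summing $18.8 s_j\delta + \delta - O(1/n^3)$ over sub-arcs gives $|W|<1.011k$. You never invoke \Cref{lem:cylinder} or prove a substitute, and without it your ``closest-endpoint'' assignment is inconsistent across lines visited in the same sub-arc, so the key step you label ``the crux'' (distinct $v\in C$ get distinct sub-arcs) simply does not follow.

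Your proposed repair is also the wrong tool. You want to upgrade ``closest endpoint'' to $\dist(r_v,\bar p(v))=O(\delta)$ via the angle control of \Cref{cl:small_plane_angle} and then use the pairwise \emph{point} separation $\sigma/(3n)$ from \Cref{cor:mindist}. The paper does not need and does not prove any bound of the form $\dist(r_v,\bar p(v))=O(\delta)$; the flattening serves to establish \Cref{cl:qdist} and to make the cylinder projection argument work, not to localize visit points near $\bar p(v)$. Moreover, even granting that localization, your accounting still compares each $|A_i|$ against $\dist(q_i,q_{i+1})$ rather than against a uniform per-sub-arc floor of $\delta$, and the visiting order on $Q$ need not be $T(Q)$, so the excess you extract is not controlled the way you claim. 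The correct mechanism is the paper's per-sub-arc charging together with the cylinder lemma and the line-to-line (not point-to-point) distance bound.
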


The proof of the first part of the lemma is straightforward. To prove the second claim, we use the fact that the tour must touch the small balls $B(q,\frac{1}{n^3})$ for each point $q\in Q$ by \Cref{lem:point_gadget}. We can then consider a portion of the tour between two consecutive ball visits, i.e., a polygonal curve $g$ that starts near some point $q\in Q$ and ends near some other point $q'\in Q$, and visits some of the lines in $\bar{\cL}$ along the way. In \Cref{lem:cylinder} we show that $g$ cannot touch lines from all three classes, in other words there is a segment $\bar{e}^i$ such that all lines visited by $g$ have an endpoint on $\bar{e}^i$. The proof of \Cref{lem:cylinder} relies on the property that $Q$ avoids the cylinder $\cY$ with axis $(0,0,0)^T,(1,1,1)^T$ and radius $\sigma/2$. Intuitively, if $g$ would touch lines from all three classes, then it would have to go around the cylinder partially, which would be too costly. We can then define a vertex cover based on the tour portions $g$: for each line $\ell$ visited by $g$, the line $\ell$ has a point on $\bar{e}^i$ that corresponds to some vertex $v$ of the graph. These vertices $v$ form a set $W$ which is clearly a vertex cover; the goal is then to prove that $|W|\leq 1.011k$. The proof hinges on the fact that if a tour portion $g$ contributes $s$ unique vertices to $W$, then it must jump between non-incident lines of $\bar{\cL}$ at least $(s-1)$ times, which incurs a cost of at least $20(s-1)\delta$ by \Cref{cor:mindist}. In case of $s=1$, the tour still needs to visit \emph{some} line in $\bar{\cL}$, which incurs a cost of at least $19.8\delta$ by \Cref{cl:qdist}. Putting these observations together (and that the minimum cost tour of the balls $B(q,1/n^3)$ has length very close to $10$) yields the desired bound on $|W|$.

\begin{proof}[Proof of \Cref{thm:noptas}] Suppose that there is a polynomial time algorithm that approximates \TSPN with lines in $\Reals^3$ within a factor of $1+\frac{1}{230000}$. Let $G$ be a given $3$-partite graph. If $G$ has a vertex cover of size $n/2$, then the above construction would have a tour of length $10+19\delta \frac{n}{2}%
= 10.002375$. On the other hand, if all vertex covers of $G$ have size at least $\frac{34}{33}\frac{n}{2}$, then all tours of the construction have length at least $10+19\delta\frac{34}{33}\frac{n}{2\cdot1.011} >%
10.00242$. As $10.00242/10.002375>1+\frac{1}{230000}$, we could use the hypothetical approximation algorithm to distinguish between these two cases in polynomial time, which would imply $\classP=\NP$.
\end{proof}


\section{No \texorpdfstring{$(2-\epsilon)$}{(2-e)}-approximation Algorithm}
\label{sec:hardness-line}

In this section we prove \Cref{thm:tspn-hardness-line}.
In particular, we will show that when the objects are lines, \TSPN is at least as hard to approximate as the Vertex Cover problem
which is known to be hard to approximate to within a factor of
$2-\varepsilon$, for any constant $\varepsilon>0$, under the Unique
Games Conjecture (and inapproximable within a factor of $1.42$ unless $\classP=\NP$~\cite{KhotMS18}).

\begin{theorem}[\cite{KhotR08}]
\label{thm:ugc-vertex-cover}
Unless the Unique Games Conjecture is false,
for any constant $\varepsilon>0$,
there is no polynomial-time algorithm that, 
given a graph $G=(V,E)$ and an integer $k$, 
distinguishes between the cases (i) $G$ has a vertex cover of size at most $k$ or (ii)
$G$ has no vertex cover of size less than $(2-\varepsilon)k$.
\end{theorem}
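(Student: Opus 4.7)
The plan is to reduce from a Unique Games instance, following the biased long-code framework of Khot and Regev. By the UG conjecture, for every $\eta,\zeta>0$ and a sufficiently large alphabet size $R$, it is $\NP$-hard to distinguish between bipartite UG instances (with vertex sets $U\cup V$ and permutation constraints $\pi_e\colon[R]\to[R]$ on each edge $e$) that admit a labeling satisfying at least a $(1-\eta)$-fraction of the constraints, and those where no labeling satisfies more than a $\zeta$-fraction. I will build a graph $H$ whose minimum vertex cover has size at most $(\tfrac{1}{2}+O(\eps))|V(H)|$ in the completeness case and at least $(1-O(\eps))|V(H)|$ in the soundness case, which yields the desired $(2-O(\eps))$-inapproximability ratio for minimum vertex cover.

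The graph $H$ is constructed as follows. For each $v\in V$ place a cloud $C_v=\{v\}\times\{0,1\}^R$, viewed as the $p$-biased Boolean cube with $p=\tfrac{1}{2}-\tfrac{\eps}{2}$, and think of $x\in\{0,1\}^R$ as the indicator of a candidate label set for $v$. For each $u\in U$ and every pair of incident UG edges $e_1=(u,v_1),\ e_2=(u,v_2)$, add an edge of $H$ between $(v_1,x)$ and $(v_2,y)$ whenever the pulled-back strings through $\pi_{e_1}$ and $\pi_{e_2}$ have disjoint $1$-supports. Thus independent sets in $H$ correspond to families $\{S_v\}_{v\in V}$ of subsets of $[R]$ such that, for every $u\in U$, the sets $\pi_{e_1}^{-1}(S_{v_1})$ and $\pi_{e_2}^{-1}(S_{v_2})$ share at least one coordinate, mirroring the consistency check used in long-code tests for label cover.

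For completeness, take a UG labeling $L$ satisfying a $(1-\eta)$-fraction of the constraints and set $I=\{(v,x)\in V(H): x_{L(v)}=1\}$. Then $I$ has $\mu_p$-density $1-p=\tfrac{1}{2}+\tfrac{\eps}{2}$ per cloud, and the only conflicting edges come from UG constraints violated by $L$, yielding an independent set of density $\tfrac{1}{2}+\tfrac{\eps}{2}-O(\eta)$. For soundness, suppose $H$ has a vertex cover of size less than $(1-\eps)|V(H)|$; then its complement is an independent set $I$ of per-cloud average density at least $\eps$. Let $f_v\colon\{0,1\}^R\to\{0,1\}$ be the indicator of $I\cap C_v$. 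The independence condition, written in the $p$-biased Fourier basis, gives $\langle T_\rho(f_{v_1}\circ\pi_{e_1}),\ f_{v_2}\circ\pi_{e_2}\rangle=0$ for an appropriate noise operator $T_\rho$ and every UG edge. Applying Friedgut's junta theorem (or the Mossel--O'Donnell--Oleszkiewicz invariance principle) to the clouds $f_v$ of non-negligible mass, one concludes that most $f_v$ have a coordinate $i\in[R]$ of non-trivial influence; picking such a coordinate at random produces a labeling of the UG instance satisfying, in expectation, a fraction of constraints depending only on $\eps$, which contradicts the soundness bound $\zeta$ once $\zeta$ is chosen small enough in terms of $\eps$ and $R$.

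The main obstacle is the soundness analysis. One must invert the biased long code robustly: starting from an essentially arbitrary Boolean function $f_v$ one has to extract a short list of influential coordinates serving as candidate labels, and then argue that the UG permutations align sufficiently many of them across edges. The quantitative efficiency of this decoding, together with the careful choice of bias $p$ so that completeness yields density exactly $\tfrac{1}{2}-O(\eps)$ while the decoding argument still forces density $1-O(\eps)$ in the soundness case, is what pins down the sharp constant $2-\eps$ rather than a weaker threshold. Fixing the quantifier order $\eps\gg\zeta\gg 1/R$ correctly, and using the Friedgut / invariance ingredient rather than a more elementary Fourier argument, is what ultimately makes the reduction achieve the factor $2-\eps$.
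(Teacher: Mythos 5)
This theorem is not proved in the paper at all: it is imported verbatim from Khot and Regev~\cite{KhotR08}, so there is no ``paper proof'' to compare against. What you have written is a sketch of the actual Khot--Regev argument, and at the level of architecture it is faithful to it: the biased long-code clouds $\{v\}\times\{0,1\}^R$ over the $V$-side of a bipartite UG instance, edges between strings whose pulled-back supports are disjoint so that independent sets are cross-intersecting families, dictator cuts for completeness, and Friedgut-style influence decoding for soundness are exactly the ingredients of that proof (modulo the standard step, which you omit, of converting the $\mu_p$-weighted graph into an unweighted one by duplicating vertices).

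As a proof, however, the proposal has a genuine gap precisely where you locate it: the soundness analysis. The entire content of the theorem is the claim that a cross-intersecting family of non-negligible $\mu_p$-measure (for $p<1/2$) must be correlated with a junta whose coordinates can be list-decoded consistently across the UG permutations; you assert this via ``applying Friedgut's junta theorem \dots one concludes,'' but the passage from ``$f_v$ is an independent set of measure $\ge\eps$'' to ``$f_v$ has an influential coordinate'' is exactly the nontrivial lemma of Khot--Regev and does not follow from Friedgut's theorem off the shelf (Friedgut requires small total influence as a hypothesis, which must first be engineered, e.g.\ by passing to a monotone shift of the family and exploiting $p<1/2$). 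The intermediate identity $\langle T_\rho(f_{v_1}\circ\pi_{e_1}), f_{v_2}\circ\pi_{e_2}\rangle=0$ is also not how the disjointness constraint is analyzed there. Finally, your density bookkeeping is internally inconsistent: with the standard convention the dictator set $\{x: x_{L(v)}=1\}$ has $\mu_p$-mass $p=\tfrac12-\tfrac{\eps}{2}$, not $1-p$; your stated completeness density of $\tfrac12+\tfrac{\eps}{2}$ would make the hardness ratio come out to exactly $2$ rather than $2-O(\eps)$, and it contradicts your own final paragraph, where you (correctly) say completeness yields density $\tfrac12-O(\eps)$. The bias must sit strictly below $1/2$ both for the ratio arithmetic and for the decoding lemma to hold.
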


\noindent The main idea behind the reduction is to represent a graph $G$ in Euclidean space such that:
\begin{itemize}[noitemsep]
  \item Each vertex $v\in V(G)$ corresponds to a point $p_v\in \Reals^d$,
  \item Each edge $e = uv\in E(G)$ corresponds to a line going through the
    points $p_u$ and $p_v$,
  \item An optimal tour visits each line sufficiently
    close to the points
    $p_v$, and therefore the vertex set corresponding to the points in
    the vicinity of the tour is a vertex cover.
\end{itemize}

However, in order to enforce that an optimal tour passes through (or not too far
from) the points $p_v$, we will have to further build upon this
 idea. In particular, for each vertex $v$, instead of
constructing only one point $p_v$, we will construct a set $P_v$ of
polynomially many points corresponding to $v$. %
If there is an edge $e=uw\in E(G)$, then we connect each point corresponding
to $u$ with each point corresponding to $w$. More precisely, for each edge
$uv$ and for every pair of points $(p_u, p_w)$ with $p_u\in P_u$ and $p_w\in
P_w$, we add a line going through $p_u$ and $p_w$. %
Notice that the number of edges increases quadratically in the number of vertex copies. %
Therefore, tours that visit lines away from the vertices are disproportionally affected,
which  forces an optimal
tour to visit lines at (or close to) the points in $P_v$. 

Another key aspect of our
construction is that we position the points of $\mathcal{P} := \bigcup_{v\in V(G)}P_v$
in $\Reals^d$ so that the distance between any pair of
distinct points is (roughly) the same. This helps us to have a more direct
correspondence between the cost of the optimal tour and the size of
an optimal vertex cover. The reduction is desribed formally in
the next subsection.

\subsection{Reduction: Vertex Cover to TSP with Line Neighborhoods}
\label{sec:hardness-line:reduction}

Take an instance of the Vertex Cover problem
on a graph $G=(V,E)$ with $n$ vertices and $m$ edges.
We first take a {\em lexicographic product} of the graph $G$ with an independent set of size $\alpha=n^2$.
Informally speaking, we construct a graph $G'$ by making $\alpha$
copies of each vertex $v\in V(G)$, and denote the corresponding vertex
set by $Q_v$.
Then, for each edge $vw\in E(G)$, we add edges between every pair of vertices $v^i\in Q_v$ and $w^j\in Q_w$, thus forming a complete bipartite graph on $Q_v$ and $Q_w$.
More formally, the graph $G'$ is defined as:
\begin{align*}
V(G') = \{v^i: v\in V(G) \land i\in [\alpha] \} \text{\quad and \quad} E(G') = \{v^iw^j: vw\in E(G) \land v\neq w \land i,j\in[\alpha] \}.
\end{align*}

Next, we use the graph $G'$ to construct an instance $I(G')$ of the
\TSPN with line neighborhoods problem in $d=O(\delta^{-2}\ln n')$
dimensions for any small enough $\delta>0$ and with $n'=|V(G')| =\alpha\cdot n$.  We
map each vertex $v$ of $G'$ to a point $p_v$ in $\Reals^d$ such that for any two points $p_v$ and $p_u$ with $v\neq
u, v,u\in V(G')$ the distance $\dist(p_v,p_u)$ between them satisfies
the following property: $1 \le \dist(p_v,p_u) \le 1+\delta$.

The fact that this is possible and can be done in polynomial time follows
by Theorem~3.1 by Engebretsen, Indyk and
O'Donell~\cite{EngebretsenIO02}. 
In particular, we
can employ the theorem in order to deterministically map a unit side length simplex from $\Reals^{n'-1}$ to $\Reals^d$
such that the desired property holds for all pairs of points.

 We denote the resulting point set by $P$.
Next, we create a collection of lines $\mathcal{L}$ in an instance of \TSPN,
by adding to $\mathcal{L}$ a line $\ell_{vw}$ passing through 
points $p_v$ and $p_w$ if $vw\in E(G')$.

We devote the rest of this section to prove completeness and
soundness of our reduction.

\subparagraph*{Completeness.}
Suppose the graph $G$ has a vertex cover of size $\leq k$.
Then we claim that there is a tour $T$ of cost at most
$\alpha k (1+\delta)$ that touches each line at least once.
To see this, let $S = \{v_1,\ldots,v_k\}$ denote the vertex cover of $G$.
By construction, $S'=\{v^i_{j}:i\in[\alpha] \land j \in [k]\}$ is
a vertex cover of $G'$. By the construction of $\cL$ and by the fact that $S'$
is a vertex cover of $G'$, it follows that any tour that visits 
points $p_{v_1^1},p_{v_1^2},\dots p_{v_k^\alpha}$ (in any order) is a
feasible tour, i.e., it touches all lines in $\mathcal{L}$.
So, in total such a tour visits a total of at most $\alpha k$ points, and
the distance between any pair of these points is by construction at most
$1+\delta$. 
Thus, there is a solution to \TSPN with cost at most $\alpha k (1+\delta)$.

\subparagraph*{Soundness.}

We show that if there is a tour of cost $x$ (where $x\leq \alpha n (1+\delta)$), then there is a vertex cover in $G$ of size at most $\frac{x}{\alpha (1-2\Delta) \lambda}$, where $\Delta$ is a small positive number and $\lambda\in [0,1]$ is very close to $1$.

The intuition behind
$\Delta$ is that it describes the maximum distance that the tour is
allowed to have to a
given point, assuming that the vertex corresponding to that point
contributes 
to the vertex cover.
 For each point $p_{v^i}\in P$ (note
that $v^i\in V(G')$), let $B(v^i)$ be a $d$-dimensional ball of radius $\Delta$
centered at $p_{v^i}$.  %
Note that $\Delta$ is small enough so the only lines from $\cL$ intersecting a ball
$B(v^i)$ are the ones that go through $p_{v^i}$. Given a tour $T$, we say that a ball
$B(v^i)$ is {\em non-empty} if $T\cap B(v^i) \neq \emptyset$;
otherwise, we say that $B(v^i)$ is {\em empty}.  
We say that a line $\ell_{uw}$ is \emph{covered by a ball} if at least
one of the balls $B(u)$ and $B(w)$ is non-empty. Otherwise
$\ell_{uw}$ is \emph{not covered by a ball}. 
We first show that any point $p\in\ell_{uw}$ that is outside the
two balls corresponding to $u$ and $w$ will not be ``too close'' to
any other line:
\begin{lemma}
  \label{lem:distance_point_line}
  \begin{recallable}{lem:distance_point_line}
    For any point $p\in\ell_{uw}$ such that $p\not\in B(u)$ and $p\not\in
    B(w)$ and for any $\ell\in \cL\setminus \{\ell_{uw}\}$ we have $dist(p,\ell)\ge \Delta/2$.
  \end{recallable}
\end{lemma}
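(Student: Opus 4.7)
Write $\ell = \ell_{xy}$ for some edge $xy \in E(G')$. Since $\ell \neq \ell_{uw}$, the unordered pairs $\{u,w\}$ and $\{x,y\}$ differ, so their intersection has size $0$ or $1$. I treat these two cases separately, exploiting the near-regularity of the point set $P$ (pairwise distances in $[1,1+\delta]$ guaranteed by the Engebretsen--Indyk--O'Donnell embedding). In both cases I will assume $\delta$ and $\Delta$ are chosen small enough that the estimates below have slack; this is harmless since $d = O(\delta^{-2}\log n')$ is the only constraint tying them to the dimension.

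\emph{Case 1: one shared endpoint.} Without loss of generality $u = x$ and $w \neq y$, so $\ell_{uw}$ and $\ell = \ell_{uy}$ intersect at $p_u$ and form some angle $\theta \in [0,\pi/2]$. The triangle $\triangle p_u p_w p_y$ has all three side lengths in $[1,1+\delta]$, so the law of cosines gives
\[
\cos\theta \;=\; \frac{|p_up_w|^2 + |p_up_y|^2 - |p_wp_y|^2}{2\,|p_up_w|\cdot|p_up_y|} \;\in\; \left[\tfrac{2-(1+\delta)^2}{2(1+\delta)^2},\ \tfrac{2(1+\delta)^2-1}{2}\right].
\]
For $\delta$ small enough these bounds force $\theta$ within $o(1)$ of $\pi/3$, in particular $\sin\theta \geq 3/4$. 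Since $p \notin B(u)$ we have $\dist(p,p_u) \geq \Delta$, and because two lines meeting at $p_u$ at angle $\theta$ satisfy $\dist(p,\ell) = \dist(p,p_u)\sin\theta$, we conclude $\dist(p,\ell) \geq \tfrac{3}{4}\Delta > \tfrac{1}{2}\Delta$.

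\emph{Case 2: disjoint endpoints.} The four points $p_u,p_w,p_x,p_y$ are distinct and have all six pairwise distances in $[1,1+\delta]$, so they form an almost-regular tetrahedron. In a truly regular unit tetrahedron, opposite edges are skew with distance exactly $1/\sqrt{2}$. By continuity of the line-to-line distance as a function of the four defining points (equivalently, by a standard perturbation argument: the configuration is $O(\delta)$-close to a regular tetrahedron up to rigid motion), one gets $\dist(\ell_{uw},\ell_{xy}) \geq 1/\sqrt{2} - O(\delta)$. For $\delta$ small enough and $\Delta \leq 1$ this is comfortably at least $\Delta/2$, and trivially $\dist(p,\ell) \geq \dist(\ell_{uw},\ell)$.

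The only real obstacle is Case~2: making the perturbation argument precise requires checking that the $4\times 4$ Gram-like matrix of the configuration is close enough to the regular one that the lines cannot become near-parallel or near-intersecting. This is straightforward because with all edge lengths in $[1,1+\delta]$ the four points are affinely independent (they cannot be coplanar for small $\delta$), so the cross-product formula $\dist(\ell_{uw},\ell_{xy}) = |(p_u-p_x)\cdot((p_w-p_u)\times(p_y-p_x))|/\|(p_w-p_u)\times(p_y-p_x)\|$ depends continuously on the points and equals $1/\sqrt{2}$ at the regular configuration.
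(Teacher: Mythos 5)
Your proposal is correct and follows essentially the same two-case decomposition as the paper's own proof: for a shared endpoint, use the law of cosines in the near-equilateral triangle to lower-bound the angle (the paper settles for $\sin\theta \ge 1/2$ while you push to $3/4$, both of which suffice); for disjoint endpoints, observe that the four points form an almost-regular tetrahedron whose opposite edges are at distance $\approx 1/\sqrt{2}$ for $\delta$ small. The only stylistic difference is that you flesh out the Case~2 perturbation argument via the cross-product formula in the 3-dimensional affine span of the four points, where the paper simply appeals to continuity.
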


We are now ready to prove that any optimal tour $T$ must cover almost all lines by
balls:
\begin{lemma}
  \label{lem:covering_at_balls}
  \begin{recallable}{lem:covering_at_balls}
  Let $T$ be a tour of cost at most $x$ with  $ x \le \alpha n
  (1+\delta)$ for the instance $I(G')$. 
Then the number of lines of $I(G')$ that are
   not covered by balls is at most
   $\frac{2x}{\Delta}$.
  \end{recallable}
\end{lemma}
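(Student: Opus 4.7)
The plan is to show that each line not covered by a ball forces the tour to spend at least $\Delta/2$ of arc length in isolation, and then argue by a pigeonhole / traversal argument that too many such lines cannot coexist in a tour of cost $x$.

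First, let $\mathcal{U}\subseteq \mathcal{L}$ denote the set of lines of $I(G')$ that are not covered by balls, and let $N=|\mathcal{U}|$. Since $T$ is a feasible \TSPN tour, for every $\ell_{uw}\in \mathcal{U}$ we may choose a point $p(\ell_{uw})\in T\cap \ell_{uw}$. By definition of not being covered, both $B(u)$ and $B(w)$ are empty, in particular $p(\ell_{uw})\notin B(u)\cup B(w)$. Therefore \Cref{lem:distance_point_line} applies to $p(\ell_{uw})$, yielding
\[
\dist\bigl(p(\ell_{uw}),\,\ell'\bigr)\;\ge\;\Delta/2 \qquad \text{for every } \ell'\in \mathcal{L}\setminus\{\ell_{uw}\}.
\]
In particular, for any two distinct lines $\ell,\ell'\in \mathcal{U}$ the chosen points satisfy $\dist(p(\ell),p(\ell'))\ge \Delta/2$, because $p(\ell')$ lies on $\ell'\ne \ell$.

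Next, I would use the fact that $T$ is a closed curve passing through all $N$ points $p(\ell_{uw})$. Parameterize $T$ by arc length and list the chosen points in the order in which they appear along $T$, say $q_1,q_2,\dots,q_N,q_{N+1}=q_1$. The length of $T$ is at least the sum of the arc-length distances between consecutive $q_i$, and each such arc-length distance is at least the Euclidean distance, which by the previous paragraph is at least $\Delta/2$. Hence
\[
x\;\ge\;\mathrm{length}(T)\;\ge\;\sum_{i=1}^{N}\dist(q_i,q_{i+1})\;\ge\;N\cdot \frac{\Delta}{2},
\]
and therefore $N\le 2x/\Delta$, which is the claimed bound.

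The only slightly delicate point is making sure that a single arc of $T$ between two consecutive uncovered-line visits really contributes at least $\Delta/2$ to the tour; this is immediate from the fact that Euclidean distance lower-bounds the arc length of any connecting path, combined with \Cref{lem:distance_point_line}. Everything else is routine bookkeeping.
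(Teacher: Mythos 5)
Your proof is correct and takes essentially the same approach as the paper: both arguments charge each uncovered line at least $\Delta/2$ of tour length, using \Cref{lem:distance_point_line} to lower-bound distances between visit points. Your version is if anything slightly cleaner, since ordering one chosen visit point per uncovered line along the tour and summing consecutive arc-lengths makes the ``no double counting'' accounting explicit, whereas the paper argues via the two adjacent segments at each such visit point and implicitly divides by two for shared segments.
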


Let $\lambda = 1 - \eps^2$ and set $\alpha=n^2$. We can construct a vertex cover of $G$ based on a tour $T$ the following way: if a set $Q_v$ has at least $\lambda\alpha$ non-empty balls, then we add $v$ to the vertex cover.

\begin{lemma}
\label{lemma:non-empty-balls}
\begin{recallable}{lemma:non-empty-balls}
The set $S=\{v: \lvert \cup_{i\in[a]}\{ v^i: B(v^i) \text{ is non-empty}\}\rvert
\ge \lambda\alpha\}$ is a vertex cover
of $G$ of size $|S| < \frac{x}{\alpha (1-2\Delta) \lambda} $.
\end{recallable}
\end{lemma}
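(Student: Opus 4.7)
The plan is to verify two assertions: (a) the set $S$ is a vertex cover of $G$, and (b) its size satisfies the stated bound. For (a), I will argue by contradiction using \Cref{lem:covering_at_balls}; for (b), I will bound the total number of non-empty balls directly in terms of the tour length, exploiting the fact that the centers $\{p_{v^i}\}$ are pairwise at distance at least $1$.

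For the vertex cover property, suppose toward contradiction that there is an edge $uw \in E(G)$ with $u, w \notin S$. By definition of $S$, each of $Q_u, Q_w$ contains fewer than $\lambda\alpha$ non-empty balls, and therefore more than $(1-\lambda)\alpha = \eps^2 \alpha$ empty ones. Since $uw \in E(G)$ and $u \neq w$, every pair $(u^i, w^j)$ with $u^i \in Q_u$, $w^j \in Q_w$ gives rise to an edge of $G'$ and hence to a line $\ell_{u^iw^j} \in \cL$, and such a line is not covered by any ball whenever both $B(u^i)$ and $B(w^j)$ are empty. Consequently more than $\eps^4 \alpha^2 = \eps^4 n^4$ lines of $I(G')$ are not covered by balls. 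On the other hand, \Cref{lem:covering_at_balls} bounds this number by $2x/\Delta \le 2 n^3 (1+\delta)/\Delta$. For all sufficiently large $n$ (with $\eps,\delta,\Delta$ fixed constants) this is strictly less than $\eps^4 n^4$, giving a contradiction. Hence every edge of $G$ has at least one endpoint in $S$.

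For the size bound, let $N$ denote the total number of non-empty balls. Because the ball families associated with distinct $v \in V(G)$ are disjoint, and each $v \in S$ contributes at least $\lambda\alpha$ non-empty balls, we immediately get $|S|\cdot \lambda\alpha \le N$. To bound $N$ in terms of $x$, choose for each non-empty ball $B$ a representative point $p_B \in T \cap B$. Traversing the closed tour $T$, these $N$ representatives appear in some cyclic order in which consecutive representatives always belong to different balls. Since any two distinct centers $p_{v^i},p_{w^j}$ are at distance at least $1$, while each representative lies within $\Delta$ of its own center, consecutive representatives are at distance at least $1 - 2\Delta$. Summing over the $N$ arcs of $T$ between consecutive representatives yields $x \ge N(1-2\Delta)$. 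Combining with $|S| \cdot \lambda\alpha \le N$ gives $|S| \le x/(\alpha(1-2\Delta)\lambda)$, which is the required bound (the strict inequality in the lemma follows either from the strict slack in the counting argument above, or directly from integrality of $|S|$).

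The main obstacle is the quantitative matching in part (a): we need $\eps^4 \alpha^2$ uncovered lines to exceed $2x/\Delta$, which is exactly where the choice $\alpha = n^2$ (the size of the independent set in the lexicographic product defining $G'$) pays off. Without this polynomial blow-up, the counting would not be tight enough to force a contradiction. The remaining steps rely only on the triangle inequality and the almost-equidistant property of the embedding of the $n'$-simplex that was set up earlier in the section.
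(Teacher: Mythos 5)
Your proposal is correct and follows essentially the same route as the paper's own proof: both establish the vertex-cover property by the same counting contradiction (an uncovered edge $uw$ yields more than $(1-\lambda)^2\alpha^2=\eps^4 n^4$ lines not covered by balls, contradicting the $O(n^3)$ bound from \Cref{lem:covering_at_balls}), and both derive the size bound from the fact that the $\ge \lambda\alpha$ non-empty balls per $v\in S$ are pairwise at distance at least $1-2\Delta$. Your representative-point/cyclic-order phrasing of the length lower bound is a slightly more explicit version of the paper's one-line argument, but the content is the same.
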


\begin{proof}[Proof of \Cref{thm:tspn-hardness-line}]
Suppose that there is an algorithm that can distinguish in polynomial time, for any $0<\eps'$ and any $x\in \Reals_+$, whether there is a tour of length at most $x$ or all tours have length at least $(2-\eps')x$. Take some instance of vertex cover, where the goal is to decide if there is a vertex cover of size at most $k$ or all vertex covers of the graph have size at least $(2-\eps)k$, where $\eps\in (0,0.1]$. By the above polynomial construction, it would be sufficient to distinguish the cases where $\mathcal{I}(G')$ has a tour of size at most $k \alpha (1-2\Delta) \lambda$ (implying a vertex cover of size at most $k$), or all tours have length at least $(2-\eps)k\alpha(1+\delta)$ (implying that all vertex covers have size at least $(2-\eps)k$). If we set $\delta=\Delta=\eps^2$, then we get that the ratio of these tours is:
\[\frac{(2-\eps)k\alpha (1+\delta)}{k\alpha (1-2\Delta)\lambda}=\frac{(2-\eps)(1+\eps^2)}{(1-2\eps^2)(1-\eps^2)}<2,\]
so the hypothetical algorithm on $I(G')$ distinguishes
these cases, which is a contradiction.
\end{proof}

We note that our reduction implies that \TSPN with Line Neighborhoods
is Vertex Cover hard, and therefore also inapproximable within a
factor of $\sqrt{2}-\eps$ unless $P=NP$~\cite{KhotMS18}.



\section{A Superpolynomial-Time Approximation Algorithm}
\label{sec:super}

\newcommand{\strue}{\textsf{True}\xspace}
\newcommand{\sfalse}{\textsf{False}\xspace}

In this section, we will show a quasi-polynomial time algorithm to approximate
\TSPN for lines to a factor of $O(\log^2n)$. %
In fact, our approach is more general: we show how to $O(\log{}N\log{}n)$-approximate \TSPN for discrete neighborhoods of total size $N$, %
in running time $N^{O(\log \log N)}$ for any fixed $d$. 
In this problem, we are given $n$ neighborhoods $P_i \subset \Reals^d$, which are discrete
sets of points. We denote by $P = \bigcup_{i \in [n]} P_i$ the union of all neighborhoods, and by $N = |P|$ its size. %
Using the approach of Dumitrescu and Tóth~\cite{DumitrescuT16}, we can convert
any instance of \TSPN with line neighborhoods into an instance of discrete $\TSPN$ on a
set of $N=O(n^4)$ points and $n$ neighborhoods. %
This transformation has a running time of $O(N)$, and incurs the loss of a constant factor in the approximation.
From now on, we focus on $\TSPN$ for discrete neighborhoods.

Our main result is an $O(\log N \log n)$-approximation algorithm for that runs
in time $N^{O(\log \log N)}$ for constant $d$. %
Our algorithm combines the dynamic program by Arora~\cite{Arora98} with the
framework of Chalermsook et al.~\cite{ChalermsookDLV17,ChalermsookDELV18}. %
As Dumitrescu and Tóth show \cite{DumitrescuT16}, \TSPN is related to the
group Steiner tree problem, and can be reduced to this problem to obtain an
$O(\log^3 n)$-approximation. %
We show that, using the structure of the Euclidean space, which is exploited
in the algorithm presented by Arora for \TSP, we can use the techniques of Chalermsook et al.~to approximate discrete
instances of \TSPN and group Steiner tree in $\Reals^d$. %
Notice that, even on tree metrics, the group Steiner tree
problem is $\Omega(\log^2n/\log\log n)$-hard to approximate~\cite{HalperinK03,GrandoniLL19} under the projection games conjecture. %
As every tree metric can be embedded into some Euclidean space with distortion $O(\sqrt{\log \log n})$ \cite{LinialMS98,Matouvsek99,Matousek2013Book}, the group Steiner tree problem in Euclidean space is also hard to approximate to within $\Omega(\log^2n/(\log\log n)^{3/2})$ under the same assumption. %

\begin{theorem}
\label{thm:super:thm}
\begin{recallable}{thm:super:thm}
There is a randomized $O(\log N \log n)$-approximation algorithm for \TSPN with discrete
neighborhoods in $\Reals^{d}$ that runs in time $N^{O(\log \log N)}$ for
constant $d$.
\end{recallable}
\end{theorem}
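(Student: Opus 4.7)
The plan is to combine Arora's hierarchical dissection for Euclidean \TSP with the framework of Chalermsook et al.\ for group Steiner tree on tree-like instances, obtaining a reduction of discrete \TSPN to a tree group Steiner tree variant that admits a good approximation algorithm. Since the Dumitrescu–Tóth discretization has already been used to replace line neighborhoods by $N = O(n^4)$ discrete points forming $n$ groups, I focus on the discrete problem directly.

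First, I would fix an accuracy parameter $\eps = \eps(d)$ small enough that Arora's $(1+\eps)$-approximate tour has structure, and apply a randomly shifted quadtree dissection of a bounding box of $P$. By Arora's structure theorem, with positive probability there exists a $(1+\eps)$-approximate tour visiting all neighborhoods that crosses each cell boundary only through one of $m=m(\eps,d)$ portals, and crosses each face at most $r=r(\eps,d)$ times. Arora's dynamic program then enumerates for each dissection cell $C$ an \emph{interface state} consisting of the multiset of portals used by the tour on $\partial C$ together with a non-crossing pairing describing how these portal visits are joined inside $C$. The dissection has depth $O(\log N)$ and $m^{O(r)}$ states per cell, and choosing $\eps$ so that the exponent becomes $O(\log\log N)$ gives a total DP size of $N^{O(\log\log N)}$ for constant $d$.

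Next, I would recast the DP as a combinatorial problem on an auxiliary rooted tree $T^\star$ whose nodes are (cell, interface state) pairs, with parent–child edges encoding compatible DP transitions. For each neighborhood $P_i$, I define a group $g_i$ as the set of leaves of $T^\star$ whose interface state witnesses a visit to $P_i$, and I assign edge costs on $T^\star$ so that the cost of a subtree of $T^\star$ rooted at the root and hitting every group $g_i$ equals, up to a factor of $(1+\eps)$, the length of the corresponding Arora DP tour. Thus discrete \TSPN reduces to group Steiner tree on $T^\star$, which has size $N^{O(\log\log N)}$ but depth $h = O(\log N)$. Finally, I would apply the Garg–Konjevod–Ravi LP rounding for group Steiner tree on trees, adapted in the Chalermsook et al.\ framework, which incurs a factor $O(h\cdot \log n) = O(\log N\cdot \log n)$. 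The overall running time is $N^{O(\log\log N)}$, dominated by enumerating the DP states in $T^\star$; randomization comes from Arora's random shift and the LP rounding.

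The main obstacle will be making the reduction in the third step rigorous: unlike in a metric tree Steiner problem, a feasible solution on $T^\star$ must respect the DP compatibility constraints (matching portals in non-crossing pairings) when selected children are combined at an internal node, and the edge costs must be defined so that any feasible GST subtree corresponds to a valid Arora-style tour of the same (up to $(1+\eps)$) cost and vice versa. Showing this correspondence faithfully, and verifying that the GKR-style rounding on $T^\star$ can be carried out without extra loss beyond the $O(\log N \log n)$ factor in spite of these side constraints, is precisely where the Chalermsook et al.\ machinery must be invoked and where the delicate work lies.
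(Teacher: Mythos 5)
You have the right two ingredients (Arora's hierarchical DP and the Chalermsook et al.\ machinery), but the proposal leaves open exactly the places where the proof does real work, and two of these are not merely technicalities.

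First, the DP must be modified before it can even encode \TSPN solutions: in the standard $(m,r)$-multipath problem every input point inside a cell must be visited, whereas a \TSPN tour may skip points. The paper adds a \emph{visit bit} to leaf-cell states so that a solution tree may choose, per leaf, whether the contained point is visited, and the groups $S_i$ are defined as the set of leaf subproblem nodes whose cell contains a point of $P_i$ \emph{and} whose visit bit is \strue. Without this modification the reduction does not express the problem. Relatedly, Arora's perturbation requires a bounding box of side $\Theta(\OPT)$, which for \TSPN is no longer determined by the point set; the paper guesses a start vertex $v_0$ and a radius $R$ with $R/2 \le \OPT \le R$ and enumerates these.

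Second, the object you call $T^\star$ is not a tree but a DAG (a (cell, state) node can have many parents), and, more importantly, ordinary group Steiner tree on $T^\star$ is the \emph{wrong} problem. A GST subtree would be free to include an internal node $(C,A)$ together with only one of $C$'s children, violating DP consistency, or to include two incompatible states of the same cell. The paper's fix is structural: introduce explicit \emph{combination nodes} encoding one consistent split of $A$ among the children of $C$, and restrict attention to \emph{solution trees}, i.e.\ out-arborescences in which every included combination node has full out-degree and every included non-leaf subproblem node has out-degree exactly $1$. This is precisely the ``Solution Tree Group Steiner Tree'' problem on DAGs for which Chalermsook et al.\ give a sampling procedure: a random solution tree of expected cost at most $\OPT$ covering each group with probability $\Omega(1/\height(H))$. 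One then repeats $O(\log N\log n)$ times and unions the resulting tours, adding cheap detours for the $o(1)$-probability uncovered groups. Invoking plain GKR rounding on a tree, as you propose, does not enforce the solution-tree constraints and so does not produce objects that decode to valid $(m,r)$-light tours; this is the gap you yourself flag, and closing it is the content of the STGST formulation, not something the off-the-shelf rounding gives for free.

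Finally, a minor miscalibration: the $N^{O(\log\log N)}$ running time is not obtained by tuning $\eps$. The paper fixes a constant-factor approximation ($c=1$, so $m,r$ are constants depending only on $d$), and the exponent arises because the rounding runs in time $\Delta(H)^{O(\height(H))}$ with $\Delta(H)=(\log N)^{O_d(1)}$ and $\height(H)=O(\log N)$, giving $(\log N)^{O(\log N)} = N^{O(\log\log N)}$.
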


The theorem above, together with the result of Dumitrescu and
T\'oth~\cite{DumitrescuT16} imply \Cref{thm:tspn-algo},
along with the derandomization techniques of Arora~\cite{Arora98} and Charikar et al.~\cite{CharikarCGG98}.

We start by recalling the main steps of the PTAS for \TSP by Arora, as
our result builds upon the dynamic program used there. %
While describing the algorithm, we state some modifications
that are necessary for our purpose. %
Then, we show how to use the framework of Chalermsook et
al.~\cite{ChalermsookDLV17,ChalermsookDELV18} to find a feasible solution using the
dynamic program.

Among all the PTASes for Euclidean \TSP, we choose to base our algorithm
on the work of Arora, as it results in the lowest running time for our
algorithm. %
Unfortunately, the results of Rao and Smith~\cite{RaoSmith98}, and Bartal and
Gottlieb~\cite{BartalG13} cannot be adapted for our purposes, %
since their algorithms use spanners to reduce the total weight of the graph to
be a constant factor away from the optimum. %
This technique breaks down for discrete \TSPN, as the spanner contains the
entire set $P$ of points, and a minimum-cost tree spanning $P$ may be much
larger than the optimum solution. %

\subsection{Arora's Algorithm}
\label{sec:super:arora}

In this section, we shortly summarize the algorithm of
Arora~\cite{Arora98} 
(a more detailed description can be found in \Cref{sec:app:super:arora}). %
This algorithm approximates \TSP to a factor of $1+1/c$; for our purpose, it is sufficient to consider $c=1$. %
Arora's algorithm has three main steps:
\begin{enumerate}[noitemsep]
\item Perturbation, which makes all coordinates are integral and bounded by $O(n)$;
\item Construction of a shifted quadtree;
\item Dynamic program, which finds the approximate solution for TSP.
\end{enumerate}

The dynamic program is based on the \emph{$(m,r)$-multipath problem} (see
\Cref{superpoly:multipath}), which given a cell of the quadtree and a set of
pairs of \emph{portals} on the boundary of the cell, has as its objective to
find a minimum-cost set of paths, each connecting a pair of portals, and such
that all of the points in the cell are visited. %
We refer to the multiset of
portals and their pairing as the \emph{state} of an $(m,r)$-multipath problem.

Two main changes are required to use the algorithm by Arora to approximate \TSPN. %
First, we must guess a point $v_0$ in an optimum solution, as well as a value
$R = O(\OPT)$ (see \Cref{sec:super:arora:1}). %
Second, we must allow the solutions to the $(m,r)$-multipath problem to not visit every point in the cell. %
We achieve this by adding a \emph{visit bit} to the state of the $(m,r)$-multipath problem on leaves: if the visit bit is $\strue$, the solution must visit the (unique) point in the cell; otherwise, it is only forced to connect each pair of portals (see \Cref{sec:super:arora:3}. %
Using these modifications, we can prove the result of \Cref{thm:super:thm}.

\subsection{Approximating \TSPN using the framework by Chalermsook et al.}
\label{sec:super:approx}

After perturbation and construction of the shifted quadtree, we use the
dynamic program above to define a \emph{dynamic programming graph}. %
The intuition is that a solution to the problem can be represented as a tree
in this graph, where the vertices in the tree correspond to all of the
$(m,r)$-multipath problems that assemble into the solution. %

We now describe the nodes and edges of this graph, denoted by $H$. %
\begin{itemize}[noitemsep]
\item \textbf{Nodes:} There are two types of nodes, which we refer to as \emph{subproblem nodes} and \emph{combination nodes}. %
The graph contains one subproblem node for every entry of the modified dynamic
programming table in \Cref{sec:super:arora}, that is, one node for each instance of the
$(m,r)$-multipath problem for every cell and set of portals and their
pairings. %
Combination nodes correspond to the possibilities of recursion for a given
subproblem: %
for a given $(m,r)$-multipath problem (for a non-leaf cell),
there is a combination node for every possible way for the $p$
paths to cross the boundary between children cells. %
\item \textbf{Root:} The root of $H$ corresponds to $(m,r)$-multipath on the root cell with no portals. %
\item \textbf{Edges:} There are (directed) edges connecting the node for each
$(m,r)$-multipath problem to the corresponding combination nodes, and then the
combination nodes to the corresponding nodes for the subproblems in the children cells. %
\item \textbf{Costs:} Edges incident to leaf nodes have cost equal to the corresponding entry in the dynamic programming table; all other edges have cost 0.
\end{itemize}

Using this definition, we can represent any $(m,r)$-light salesman path as a
tree $T$ in $H$. %
For each cell, the solution restricted to that cell consists of a union of
disjoint paths, which induce a set of portals and their pairing, and hence an
instance of the $(m,r)$-multipath problem. %
We include the corresponding subproblem node in $T$. %
For each non-leaf cell, there is a
 combination node which represents the way in which the paths
cross boundaries between children cells. %
We add that combination node to $T$, as well as all of the edges containing
it.

The trees obtained by this process have a very specific structure: the root
node is always included, as well as exactly one subproblem node for each cell
and one combination node for each non-leaf cell. %
Adding the edges between these nodes, we see that each node other than the
root has in-degree $1$, each subproblem node has exactly one outgoing edge (if
it is not a leaf), and each combination node has full out-degree, as all of its
children nodes are also in the solution.
All of these properties are implicitly formulated in the work of Chalermsook
et al.~\cite{ChalermsookDLV17,ChalermsookDELV18}; we formalize them below.

\newcommand{\Scal}[0]{\ensuremath{\mathcal S}\xspace}
\begin{definition}[Solution tree] \leavevmode
\label{def:gst:btw:soltree}

Let $H$ be a DAG with root $r$, and its nodes be partitioned into
\emph{combination nodes} $H_c$, and \emph{subproblem nodes} $H_p$.
We say an out-arborescence $T\subseteq H$ rooted at $r$ is a \emph{solution tree} if:
\begin{enumerate}[noitemsep]
\item Every combination node $t^c \in T \cap H_c$ has full out-degree (i.e.,~all children are also in~$T$),
\item Every non-leaf subproblem node $t \in T \cap H_p$ (including the root $r$) has
out-degree $1$ in $T$. 
\end{enumerate}
\end{definition}

As we mentioned above, we can associate a solution tree to any $(m,r)$-light
salesman path. The converse is also true: for each solution tree, there is a
corresponding $(m,r)$-light salesman path. %
The final requirement for a solution to be feasible is that each neighborhood
must be covered, meaning that the tour must intersect every neighborhood. %

Consider a tour corresponding to a solution tree in $T$. %
If a leaf subproblem node in the solution tree corresponds to an
$(m,r)$-multipath problem with the visit bit set to $\strue$, then the (unique) point in the cell is visited. %
In fact, the set of points visited by this tour is exactly the set of points contained in the leaf cells for subproblem nodes with the corresponding visit bit set to \strue. %
In other words, including a given leaf subproblem node in the solution tree
implies that certain neighborhoods are covered by the solution.

We can solve \TSPN by formulating it as finding a solution tree that covers every neighborhood. %
Let $S_i$ be the set of all subproblem nodes whose visit bit is \strue, and whose cell contains a point in $P_i$. %
Our goal is to find the minimum-cost solution tree that contains at least one node of each set $S_i$. %
This problem resembles \GST, and is defined in the work of Chalermsook et al.~\cite{ChalermsookDLV17,ChalermsookDELV18}. %
We redefine the problem using our own notation.

\begin{definition}[Solution Tree Group Steiner Tree (\STGST)] \leavevmode
\label{def:super:approx:stgst}

Let $H$ be a DAG with edge-costs $\cost: E(H) \to \Reals$ and root $r$, as well as \emph{groups} $S_i \subseteq V(H)$, for $i \in [h]$, and a partition of the nodes into \emph{combination} and \emph{subproblem} nodes ($H_c$ and $H_p$ respectively).
The objective of this problem is to find a minimum-cost solution tree $T$ that
contains at least one vertex of every group $S_i$.
\end{definition}

Their work shows that we can approximate this problem on DAGs, in the following sense.

\begin{theorem}[{\cite{ChalermsookDLV17,ChalermsookDELV18}}]
\label{thm:super:approx:round}
Let $H$ be a DAG with edge-costs $\cost: E(H) \to \Reals$ and root $r$, as well as {groups} $S_i \subseteq V(H)$, for $i \in [h]$, and a partition of the nodes into $H_c$ and $H_p$.
There is an algorithm that outputs a solution tree $X \subseteq H$ sampled
from a distribution $\mathcal{D}$ such that:
\begin{enumerate}[noitemsep]
\item $\E_{X\sim \mathcal{D}}[\cost(X)] \leq \cost(\OPT)$, where $\cost(\OPT)$ denotes the cost of the optimal solution
\item For any group $S_i$, the probability that the group is covered (for some constant $\alpha > 1$) is
\[
\Prob_{X \sim \mathcal{D}}\bigl[|S_i \cap X| > 0\bigr] \geq \frac{1}{\alpha \height(H)}
\]
\end{enumerate}
The running time of this algorithm is $\Delta(H)^{O(\height(H))}$, where $\Delta(H)$, $\height(H)$ are the maximum out-degree and height of $H$, respectively.
\end{theorem}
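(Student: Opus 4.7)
The plan is to prove \Cref{thm:super:approx:round} by LP rounding in the style of Garg--Konjevod--Ravi, adapted to the AND-OR structure of the DAG $H$ induced by combination and subproblem nodes. First I would write down the natural LP relaxation of \STGST: for every arc $e$ introduce a variable $x_e \in [0,1]$ and for every node $v$ a variable $y_v \in [0,1]$. The constraints are $y_r = 1$; for every non-root node $v$, $y_v \leq \sum_{e \in \mathrm{in}(v)} x_e$ (some incoming arc must activate $v$); for every combination node $v$ and every outgoing arc $e$ of $v$, $x_e = y_v$, encoding the AND behavior; for every non-leaf subproblem node $v$, $\sum_{e \in \mathrm{out}(v)} x_e = y_v$, encoding the OR choice; and for every group, $\sum_{v \in S_i} y_v \geq 1$. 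Minimize $\sum_e \cost(e)\, x_e$. The LP has size polynomial in $|V(H)| \leq \Delta(H)^{\height(H)}$, so it can be solved within the stated running time.

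Next I would perform a top-down randomized rounding. Starting from the root, recursively mark nodes as follows: when a combination node is marked, mark all its children; when a subproblem node $v$ is marked, independently pick at most one out-arc $(v,w)$ with conditional probability $x_{(v,w)}/y_v$ and mark $w$. By construction, the resulting subgraph $X$ satisfies the conditions of \Cref{def:gst:btw:soltree}, hence is a solution tree. A straightforward induction on depth shows that arc $e$ appears in $X$ with probability exactly $x_e$, giving $\E[\cost(X)] = \sum_e \cost(e)\, x_e \leq \cost(\OPT_{\mathrm{LP}}) \leq \cost(\OPT)$, which establishes property~1.

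The hardest step will be the coverage probability bound. Fix a group $S_i$. The marginal probability that $v \in S_i$ is marked equals $y_v$, but one cannot just sum these because paths in $H$ to distinct group members share prefixes, creating positive correlation at combination ancestors and negative correlation at subproblem ancestors. Following the GKR paradigm, I would prove by induction along the height, from leaves upward, that for every node $u$,
\[
\Prob\bigl[\,X \cap S_i \cap \mathrm{reach}(u) \neq \emptyset \,\bigm|\, u \text{ is marked}\,\bigr] \;\geq\; \min\!\left\{1,\ \frac{1}{c\, h_u}\sum_{v \in S_i \cap \mathrm{reach}(u)} \frac{y_v}{y_u}\right\},
\]
for an absolute constant $c$, where $h_u$ is the height of the subgraph rooted at $u$. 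At a combination node all children are marked jointly, so the inductive bound only improves by summing over children. At a subproblem node, the child is chosen proportionally to $x_e/y_u$, so invoking the inductive hypothesis on the chosen child and using concavity of $z \mapsto \min\{1,z\}$ together with Jensen's inequality yields a recurrence that loses a factor of at most $1 + 1/h_u$ per level; unrolled, this is $O(\height(H))$. Evaluated at the root, where the covered LP mass is at least $1$, this gives the required $\Prob[X \cap S_i \neq \emptyset] \geq 1/(\alpha\, \height(H))$.

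Finally, both the LP solve and the rounding run in time polynomial in $|V(H)| + |E(H)| = \Delta(H)^{O(\height(H))}$, matching the claimed bound. I expect the coverage inductive step, specifically identifying the correct normalization so that positive correlation at combination nodes does not break the recursion, to be the main technical obstacle; the cost and running-time parts are routine once the LP and rounding are in place.
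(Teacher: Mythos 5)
This theorem is quoted from Chalermsook et al.~\cite{ChalermsookDLV17,ChalermsookDELV18}; the present paper does not supply its own proof, so there is no in-paper argument to compare against. Your plan does follow the cited works' high-level strategy: a GKR-style LP relaxation of \STGST, top-down randomized rounding in which combination nodes propagate to all children and subproblem nodes choose one child proportionally to LP flow, and a leaves-to-root induction bounding the coverage probability by $\Omega(1/\height(H))$.

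The genuine gap is in your treatment of the DAG structure. In $H$ a node can have several in-arcs (in the quadtree construction, the same child subproblem $t[C_i,X_i]$ is consistent with many parent states $X$ of cell $C$). Your rounding begins at the root and recursively marks children; your claim that arc $e$ lands in $X$ with probability exactly $x_e$, and hence that $v$ is marked with probability $y_v$, implicitly assumes that the marking events coming in via distinct in-arcs of $v$ are mutually exclusive. Without that, the marginals are not what you write, the marked set need not be an out-arborescence (so not a solution tree at all), and the concavity step in the coverage induction has no clean conditioning to apply. This disjointness does hold for the specific $H$ arising from Arora's quadtree---for each cell, at most one subproblem node is ever marked, because the quadtree gives each cell a unique parent---but that is a structural property of this particular $H$ that your argument uses silently, and it fails for an arbitrary DAG of the kind the theorem is stated for. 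The standard remedy, which is also exactly what the claimed $\Delta(H)^{O(\height(H))}$ running time corresponds to, is to unfold $H$ into its path tree, where each copy of a node has a unique in-arc, and then write the LP and run the GKR rounding on that tree. With this fix made explicit, your cost accounting, the coverage induction you sketch, and the running-time bound all line up with the cited result.
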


Using this result, all we need to prove \Cref{thm:super:thm} is to
show that we can formulate \TSPN as an instance of \STGST, and then to show how
to obtain an $O(\log^2 n)$-approximation from \Cref{thm:super:approx:round}. %
We show the details of these steps in \Cref{sec:app:super:approx}.


\section{Conclusion}\label{sec:conclusion}

We have shown that \TSPN with line neighborhoods is APX-hard, so a PTAS for this problem is unlikely. This implies the same hardness for $k$-dimensional
flats in $\Reals^d$ for $1\leq k \leq d-2$, which together with the known PTAS results for $k=0$ and $k=d-1$ gives a complete classification of these
problems. We have also proved a stronger inapproximability factor for
$d=O(\log n)$: there is no $(\sqrt{2}-\eps)$-approximation assuming $\classP \neq
\NP$ and no $(2-\epsilon)$-approximation assuming the UGC. On the positive side, we gave
an $O(\log^2 n)$-approximation algorithm in slightly superpolynomial
time.

There is still a large gap between the lower bounds and the algorithms
for \TSPN with line neighborhoods. Perhaps the most important question
related to \TSPN is to find a constant-approximation for line neighborhoods in $\Reals^3$, or to prove that it does not exist.
Furthermore, for general point sets in higher dimensions there is an
inapproximability of $\Omega(\log^2n/(\log\log n)^{3/2})$ under the Projection
Games Conjecture. Whether that holds for flats or lines is an open problem.

\nolinenumbers
\bibliographystyle{plain}
\bibliography{tspn}

\appendix


\section{Detailed Proofs from Section~\ref{sec:noptas}}

\begin{figure}
\centering
\includegraphics{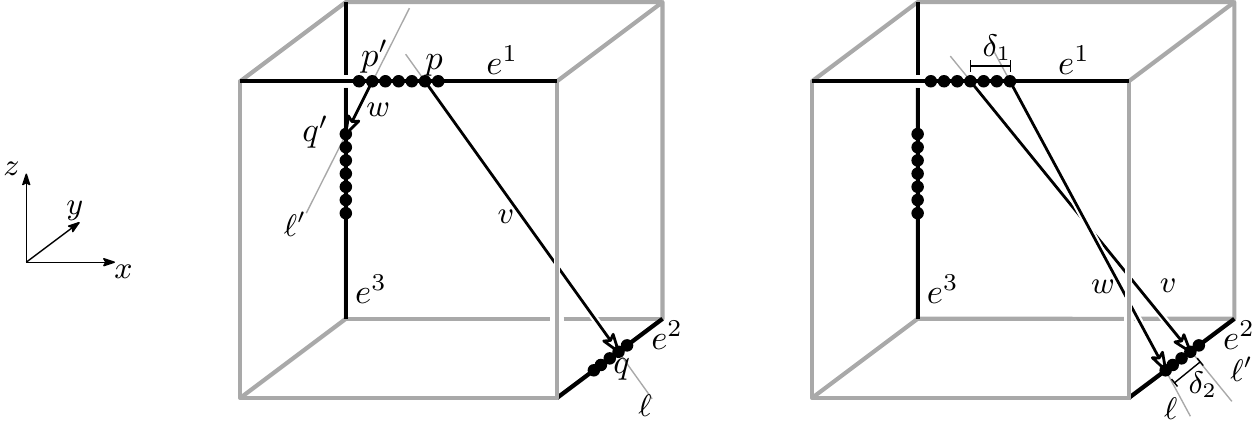}
\caption{Bounding the distance of the skew lines $\ell,\ell'\in \cL$.}\label{fig:linedist}
\end{figure}

\recallthm{lem:cube_skew_dist}

\begin{proof}
Assume without loss of generality that  $\ell$ and $\ell'$ intersect $e^1$ at the points $p,p'$ respectively, and moreover assume that $\ell$ also intersects $e^2$ at $q$. The line $\ell'$ intersects either $e^2$ or $e^3$ at some point $q'$, see Figure~\ref{fig:linedist}.
For a vector $x\in \Reals^3$, let $x_1,x_2$ and $x_3$ be its coordinates. The line $\ell$ has the vector equation $\ell=\{p+\xi (q-p) \mid \xi \in \Reals\}$, and similarly $\ell'=\{p'+\xi (q'-p') \mid \xi \in \Reals\}$. Their distance is therefore
\[\dist(\ell,\ell')=\Big|\Big\langle p'-p, \, \frac{(q-p)\times(q'-p')}{|(q-p)\times(q'-p')|} \Big\rangle\Big|.\]
Let $v=q-p$ and let $w=q'-p'$. Since $p'-p$ is parallel to $e^1$ and the $x$-axis, this is
\[\dist(\ell,\ell')=|p'_1-p_1| \frac{|(v\times w)_1|}{|v\times w|}=\frac{|(p'_1-p_1)(v_2w_3-v_3w_2)|}{\sqrt{(v_2w_3-v_3w_2)^2+(v_3w_1-v_1w_3)^2+(v_1w_2-v_2w_1)^2}}.\]
Note that $p_1'-p_2'\geq \frac{1}{3n}$, the values $v_1$ and $v_2$ are both in the interval $[1/3,2/3]$, and $v_3=-1$.
We now consider the cases where $\ell'$ intersects $e^2$ and $e^3$ separately.
If $\ell'$ intersects $e^3$, then $w_2=1$, and $w_1$ and $w_3$ are inside the interval $[-2/3,-1/3]$. We can use these facts to bound each term:
\[\dist(\ell,\ell')\geq \frac{\frac{1}{3n}(-2/3+1)}{\sqrt{(-1/9+1)^2+(2/3+4/9)^2+(2/3+4/9)^2}}=\frac{1}{2\sqrt{66}n}>\frac{1}{20n}.\]
If $\ell'$ intersects $e^2$, then $w_1$ and $w_2$ are both in the interval $[1/3,2/3]$, and $w_3=-1$. As we now have $q_1=q'_1=1$, it follows that $p'_1-p_1= (p'_1 - q'_1) -(p_1-q_1) = v_1 - w_1$. Let $\delta_i=v_i-w_i$. Then the above formula becomes:
\[\dist(\ell,\ell')=\frac{|\delta_1\delta_2|}{\sqrt{\delta_1^2+\delta_2^2+(v_1w_2-v_2w_1)^2}}.\]
The term $|v_1w_2-v_2w_1|$ can be bounded the following way:
\[|v_1w_2-v_2w_1|= \frac{|(v_1-w_1)(v_2+w_2)-(v_1+w_1)(v_2-w_2)|}{2}= \frac{|\delta_1(v_2+w_2)-\delta_2(v_1+w_1)|}{2}  \leq |\delta_1|+|\delta_2|.\]
We can substitute this to get a lower bound on the distance:
\begin{align*}\dist(\ell,\ell')
&\geq \frac{|\delta_1\delta_2|}{\sqrt{\delta_1^2+\delta_2^2+(|\delta_1|+|\delta_2|)^2}}
>  \frac{|\delta_1\delta_2|}{\sqrt{2(|\delta_1|+|\delta_2|)^2}} 
= \frac{1}{\sqrt{2}}\,\frac{|\delta_1\delta_2|}{|\delta_1|+|\delta_2|}\\
&= \frac{1}{2\sqrt{2}}\frac{2}{\frac{1}{|\delta_1|}+\frac{1}{|\delta_2|}}
\geq \frac{1}{2\sqrt{2}}\min(|\delta_1|,|\delta_2|)
\geq \frac{1}{6\sqrt{2}n}>\frac{1}{10n},
\end{align*}
where we have used that both $|\delta_1|$ and $|\delta_2|$ is at least $\frac{1}{3n}$, and the fact that the harmonic mean of two numbers is at least as large as the smaller number.
\end{proof}

\recallthm{cl:small_plane_angle}

\begin{proof}
It is sufficient to prove the claim for $a=1$ because of symmetry. First, we compute the normal of $\bar{F}^1_1$, which is the plane through the points $A(0,0,1)^T$, $A(1,0,1)^T$, and $A(0,0,0)^T$. Therefore, it goes through $p_1=(-0.3,-0.3,0.7)^T$, $p_2=(0.4,-0.6,0.4)^T$ and the origin; its normal is therefore $n_1=\frac{p_1 \times p_2}{|p_1 \times p_2|}=\frac{1}{\sqrt{0.34}}(0.3,0.4,0.3)^T$. Similarly, $\bar{F}^1_2$ is the plane through $p_1$, $p_2$ and $A(1,1,1)^T=(0.1,0.1,0.1)$. The calculation yields that the normal is $n_2=\frac{1}{\sqrt{0.34}}(0.3,0.3,0.4)^T$. The angle of the planes is therefore
\[\cos^{-1}(\langle n_1,n_2\rangle)=\cos^{-1}(0.33/0.34)<\frac{1}{4}.\qedhere\]
\end{proof}

\recallthm{cl:qdist}

\begin{proof}
As the distance between points of $Q$ and lines in $\bar{\cL}$ is minimized only at points of $Q_a$, we may assume without loss of generality that $q\in Q_1$. If $\ell$ connects $\bar{e}^2$ and $\bar{e}^3$, then its distance from $q$ is much more than $9\delta$, so assume that $\ell$ connects a point of $\bar{e}^1$ to $\bar{e}^2$. (The case when $\ell$ goes from $\bar{e}^1$ to $\bar{e}^3$ is similar.) Notice that all such lines are separated from $q$ by the union of the planes $\bar{F}^a_1$ and $\bar{F}^a_2$, so $d(q,\ell)\geq \min(d(q,\bar{F}^1_1),d(q,\bar{F}^1_2))$. By the definition of $Q_1$, we have that
\[d(q,\bar{F}^1_1)=d(q,\bar{F}^1_2)=\dist(q,\bar{e}^1)\cos(\an(\bar{F}^a_1,\bar{F}^a_2)/2).\]
Since $\an(\bar{F}^a_1,\bar{F}^a_2)< 1/4$ by \Cref{cl:small_plane_angle}, and $\dist(q,\bar{e}^1)=\sqrt{(10\delta^2)-(\delta/2)^2}>9.98\delta$, therefore
\[d(q,\ell) > \cos(1/8) \cdot 9.98\delta > 9.9 \delta\qedhere\]
\end{proof}

\begin{figure}
\centering
\includegraphics[scale=0.9]{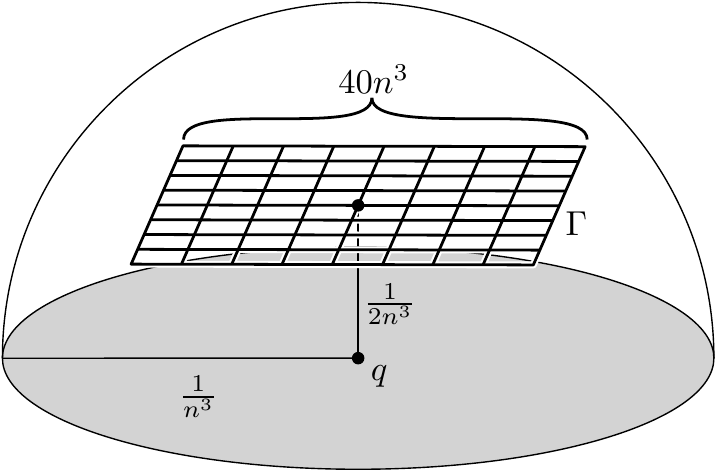}
\caption{Defining the lines of a point gadget for a point $q$ using a grid $\Gamma$.}\label{fig:pointgagdget}
\end{figure}

\recallthm{lem:point_gadget}

\begin{proof}
Let $\Gamma$ be a $40n^3\times 40n^3$ plane grid where each cell has side length $\frac{1}{80n^6}$. The grid fits in a square of side length $\frac{1}{2n^3}$. Let $q$ be the origin, and place the grid $\Gamma$ in the plane $z=\frac{1}{2n^3}$, within the axis-parallel square with diagonal vertices $(-\frac{1}{4n^3},-\frac{1}{4n^3},\frac{1}{2n^3})^T$ and $(\frac{1}{4n^3},\frac{1}{4n^3},\frac{1}{2n^3})^T$. See Figure~\ref{fig:pointgagdget}. Notice that the grid is contained in the ball $B(q,\frac{1}{n^3})$.

Let $\cL$ be the set of lines through $q$ that contain a grid point. This is a set of $O(n^6)$ lines. Consider now a shortest \TSPN tour of $\cL$ that is disjoint from $B(q,\frac{1}{n^3})$. The sets $\ell\setminus B(q,\frac{1}{n^3})$ for each $\ell \in \cL$ have pairwise distance at least $\frac{1}{80n^6}$, so the tour must have length at least $40n^3 \cdot  40n^3 \cdot \frac{1}{80n^6} = 20$.
\end{proof}

\recallthm{lem:noptas_main}

\begin{proof}
To prove the first claim, let $W$ be a a vertex cover of $G$ of size
$k$. We can create a tour $T$ by first adding all edges of $\gamma$,
and for each vertex $w\in W$, we add a detour: if $q,q'$ are the
nearest points of $Q$ to $w$, then we remove the segment $qq'$ of
length $\delta$ and add the segments $qw$ and $wq'$ of length
$10\delta$ each to the tour. For each vertex $w\in W$, this results in
a length increase of $19\delta$, so the resulting tour $T$ has length
$10+19\delta k$ as required. We can verify that $T$ touches every line of $\cL^*$. It goes through each $q\in Q$, thus it goes through all lines in point gadgets. For each line $\ell\in \bar{L}$ the corresponding graph edge $uv$ is covered by the vertex cover, so either $u\in W$ or $v\in W$. Therefore $\ell$ is touched either at $Ap(u)$ or $Ap(v)$.

To prove the second claim, let $T$ be a tour of length $10+19\delta k$. Since $\delta=1/(4000n)$ and $k\leq n$, we have that the length of $T$ is less than $20$. Since $T$ touches each line, it is also a valid tour for any subset of lines. In particular, for each $q\in Q$ it is a tour of length less than $20$ of the point gadget of $q$. Consequently, $T$ intersects each ball $B(q,1/n^3) (q\in Q)$. Note that by the properties of $Q$, these balls are disjoint and have pairwise distance more than $2/n^3$ if $n$ is large enough. Let $B_q$ denote the ball $B(q,1/n^3)$.

Without loss of generality, we can assume that $T$ is a $3$-dimensional (skew) polygon whose vertices are on the lines of $\cL^*$. Consider the vertices of $T$ in order, and remove all vertices of the sequence that are only incident to lines of point gadgets, but lie outside the balls $B_q (q\in Q)$. Furthermore, remove entries that fall inside $\bigcup_{q\in Q} B_q$ until we get a sequence where there is a unique vertex $h_i$ from each $B_q$.
Let $h=(h_1,\dots,h_m)$ be the sequence of vertices we get this way.\footnote{The sequence $h$ should be understood as a cyclic sequence, where indices are defined modulo $m$. In particular $h_{m+1}=h_1$.} As a result, for each $\ell\in \bar{\cL}$ there exists a point $h_i\in \ell$, and for each $q\in Q$, there is some unique entry $h_{j}\in B_q$. Fix an orientation of $T$, and let $T(h_i,h_j)$ denote the subpath of $T$ from $h_i$ to $h_j$. The balls $B_q$ partition $T$ into $|Q|$ subsequences, so $h$ can be regarded as the concatenation of sequences $g^1,g^2,\dots,g^{|Q|}$ where for each $j\in \{1,\dots, |Q|\}$ we have $g^j=(g^j_0,g^j_1,\dots,g^j_{t(j)})$, $g^j_0\in \bigcup_{q\in Q} B_q$, and for each $i\in \{1,\dots,t(j)\}$ it holds that $g^j_i \not\in \bigcup_{q\in Q} B_q$.

\begin{claim}\label{cl:shortstep}
$\cost(T(g^j_0,g^{j+1}_0))\geq \delta - 2/n^3$.
\end{claim}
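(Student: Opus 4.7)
The plan is to unpack what $T(g^j_0, g^{j+1}_0)$ is and then apply the triangle inequality twice. By construction, the sequence $h$ was obtained from $T$ by keeping exactly one vertex inside each ball $B_q$, so the two endpoints $g^j_0$ and $g^{j+1}_0$ lie in two \emph{distinct} balls. Let $q, q' \in Q$ be the centers of those balls, so that $q \ne q'$, $g^j_0 \in B(q, 1/n^3)$ and $g^{j+1}_0 \in B(q', 1/n^3)$.

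Next I would use the pairwise separation property of $Q$ stated when defining the set $Q$, namely that any two distinct points of $Q$ are at distance at least $\delta$. Combined with the fact that each ball has radius $1/n^3$, the triangle inequality yields
\[
\|g^j_0 - g^{j+1}_0\| \;\ge\; \|q - q'\| - \|g^j_0 - q\| - \|g^{j+1}_0 - q'\| \;\ge\; \delta - \tfrac{2}{n^3}.
\]

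Finally, since $T(g^j_0, g^{j+1}_0)$ is a sub-path of the (rectifiable polygonal) tour $T$ connecting $g^j_0$ to $g^{j+1}_0$, its length is at least the Euclidean distance between its endpoints, giving $\cost(T(g^j_0, g^{j+1}_0)) \ge \|g^j_0 - g^{j+1}_0\| \ge \delta - 2/n^3$, as claimed. There is essentially no obstacle here; the only thing to be careful about is to justify that consecutive $g$-blocks sit in different balls, which follows immediately from the uniqueness condition used when constructing the sequence $h$.
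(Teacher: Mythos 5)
Your proof is correct and follows essentially the same route as the paper: identify the two distinct balls containing the endpoints, use the $\delta$-separation of $Q$ together with the triangle inequality to bound the distance between the endpoints, and then lower-bound the sub-path length by that distance. The paper's version is just terser, omitting the explicit triangle-inequality step you spell out.
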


\begin{proof}
Let $q,q'$ be points of $Q$ such that $g^j_0\in B_q=B(q,1/n^3)$ and $g^{j+1}_0\in B_{q'}=B(q',1/n^3)$. By the definition of $Q$, we have $\dist(q,q')\geq \delta$. Consequently,
$\cost(T(g^j_0,g^{j+1}_0))\geq \dist (g^j_0,g^{j+1}_0) \geq \delta - 2/n^3$.
\end{proof}

Consider now a sequence $g^j$, and let $u^j_1,v^j_1,u^j_2,v^j_2, \dots,u^j_{t(j)},v^j_{t(j)}\in \bar{P}$ be a sequence of points such that the point $g^j_i$ is on the line $\ell(u^j_i,v^j_i)$.

\begin{lemma}\label{lem:cylinder}
There do not exist lines $\ell_1,\ell_2,\ell_3\in \{\ell(u_1v_1), \dots \ell(u^j_{t(j)},v^j_{t(j)})\}$ such that $\ell_1$ connects $\bar{e}^1$ with $\bar{e}^2$, line $\ell_2$ connects $\bar{e}^2$ with $\bar{e}^3$, and $\ell_3$ connects $\bar{e}^3$ with $\bar{e}^1$.
\end{lemma}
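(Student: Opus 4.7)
Suppose for contradiction there exist $\ell_1, \ell_2, \ell_3$ from the three distinct classes in the list, and let $x_i$ be the point at which $g^j$ visits $\ell_i$. The plan is to combine an upper bound on the subtour length $L_j := \cost(T(g_0^j, g_0^{j+1}))$ with a geometric lower bound on $\dist(x_i, \bar{e}^a)$ for a suitable cube edge $\bar{e}^a$, showing the two cannot be satisfied simultaneously.

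Summing \Cref{cl:shortstep} over the $|Q|=10/\delta$ segments of the tour and using $\cost(T)\le 10+19\delta k$ gives $L_j \le 19\delta k + \delta + O(|Q|/n^3)$. We may assume $k = O(n)$, since otherwise the conclusion of \Cref{lem:noptas_main} is trivial (as $|V(G)| = 3n$). Thus $L_j = o(1)$ as $n\to\infty$. Let $q \in Q$ be the point with $g_0^j\in B(q,1/n^3)$; since each $x_i$ lies on the subpath starting at $g_0^j$, we have $\dist(x_i,q)\le L_j + 1/n^3$. The construction of $Q$ leaves enough freedom to place its ``connecting'' parts at distance at least $10$ from the origin, so that any $q$ outside $Q^1\cup Q^2\cup Q^3$ lies at distance at least $1$ from every line in $\bar{\cL}$---incompatible with $g^j$ touching any line at all. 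Hence $q\in Q^a$ for some $a\in\{1,2,3\}$, and by the construction of $Q^a$ we have $\dist(q,\bar{e}^a)\le 10\delta$; therefore $\dist(x_i,\bar{e}^a) \le L_j + 1/n^3 + 10\delta = o(1)$ for every $i$.

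The key geometric claim is that the distance from $\bar{e}^a$ to any line in the class \emph{not} containing $\bar{e}^a$ is bounded below by a positive constant $c^*$. In the pre-flattened unit cube, applying the skew-line distance formula and minimising over the middle-third parameter range of the two opposite cube edges yields $\dist(e^a,\ell)\ge \tfrac{2\sqrt 2}{3}\approx 0.943$. Since $A = I - 0.3J$ scales the $(1,1,1)$-direction by $0.1$ and preserves directions perpendicular to it, $\|Av\|\ge 0.1\|v\|$ for all $v$, giving $c^* \ge \tfrac{2\sqrt 2}{30} \approx 0.094$. Exactly one of $\ell_1,\ell_2,\ell_3$ lies in the class opposite to $\bar{e}^a$; for that $\ell_i$ the two bounds on $\dist(x_i,\bar{e}^a)$ are incompatible for $n$ large enough, yielding the required contradiction. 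The main obstacle is this geometric lower bound, requiring an explicit parametric optimisation via the skew-line distance formula together with careful bookkeeping of how the flattening $A$ shrinks distances; conceptually, this is where the cylinder $\cY$ plays its role in the paper's intuition, ensuring via $Q\cap\cY=\emptyset$ and the distant placement of $Q$'s connecting arcs that each $g^j$ is anchored near a single edge $\bar{e}^a$ and so cannot reach lines of all three classes within the tour's excess budget of $19\delta k$.
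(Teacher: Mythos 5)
Your proof takes a genuinely different route from the paper. The paper establishes that every subtour portion $T'$ that touches lines from all three classes must go around the cylinder $\cY$: it first shows $T$ avoids $\cY$ using \Cref{cl:shortstep}, then projects everything onto the plane $H\colon x+y+z=0$ and runs a combinatorial argument with twelve angular cones to deduce that $\pi(T')$ must cross two non-adjacent cones, incurring a detour of length $\geq\sigma/4$. Your proof instead ``anchors'' $g^j$ near a single edge $\bar{e}^a$ and uses a direct distance lower bound from $\bar{e}^a$ to any line in the opposite class. The core geometric observation driving your approach---that $\dist(e^a,\ell)$ is bounded below by a positive constant for every line $\ell$ joining the other two edges, and that this survives the flattening with a factor $0.1$---is correct and gives a cleaner picture than the cone argument.

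However, there is a genuine gap. Your step ``any $q$ outside $Q^1\cup Q^2\cup Q^3$ lies at distance at least $1$ from every line in $\bar{\cL}$'' is not a property the paper's construction provides and, as stated, it is false for the suggested construction. The paper's $Q$ is obtained by ``following the lines $\bar{e}^a$ and connecting them far from the origin,'' so the transition points of $Q$ lie along the extensions of the lines $\bar{e}^a$; any line of $\bar{\cL}$ that passes through $\bar{e}^a$ is at distance $O(\delta)$ from nearby points of that extension, not $\Omega(1)$. The only guarantee the paper gives is \Cref{cl:qdist}, namely $\dist(q,\ell)>9.9\delta$ for every $q\in Q$, which is several orders of magnitude smaller than the $L_j+1/n^3\approx 0.015$ detour budget you need to rule out. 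Without the strong separation property, your deduction ``hence $q\in Q^a$'' fails, and the rest of the argument has no anchor. This is exactly the point the cylinder is there to handle: it gives a global obstruction that does not depend on where along the $Q$-cycle the subtour $g^j$ starts. To repair your proof one would either have to (i) show that for \emph{every} $q\in Q$ (including those on extensions and connecting arcs) there is some class $a(q)$ such that all lines in the opposite class are at distance $\Omega(1)$ from $q$---which requires separate reasoning for the connecting arcs---or (ii) re-derive the global obstruction, at which point one essentially reconstructs the cylinder argument. There is also a minor inaccuracy: $L_j$ is not $o(1)$; from $L_j\leq 19\delta k+\delta+O(1/n^2)$ and $k\leq 3n$, $\delta=1/(4000n)$, you only get $L_j\lesssim 0.015$, a small constant. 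Fortunately this is still comfortably below your $c^*\approx 0.09$, so that part of the argument survives; but the wording should be corrected.
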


\begin{proof}
First, we show that $T$ is disjoint from the cylinder $\cY$ with axis $(0,0,0)^T(1,1,1)^T$ and radius $\sigma/2$. The cylinder $\cY$ has distance more than $\sigma/3$ from the points of $Q$, so a tour $g^j$ touching the cylinder has length at least $2\sigma/3$. By \Cref{cl:shortstep}, we get \[\cost(T)= \sum_{j=1}^{|Q|} \cost(T(g^j_0,g^{j+1}_0))\geq (|Q|-1)(\delta - 2/n^3) + 2\sigma/3 = 10 + 2\sigma/3 - \delta - O(1/n^2)>10.5,\] which is a contradiction as $\cost(T)<10+20n\delta<10.5$.

Suppose for the sake of contradiction that $\ell_1,\ell_2,\ell_3$ are lines touched by $T$ between touching $B_q$ and $B_{q'}$. In particular, the portion of the tour between $B_q$ and $B_{q'}$ contains a path $T'$ that is disjoint from $\cY$, and goes from $\ell_1$ to $\ell_3$, but touches $\ell_2$ on the way. Let us project the lines and the tour into the plane $H:x+y+z=0$ perpendicularly, and denote the projection with $\pi(.)$. We have $\cost(\pi(T'))\leq \cost(T')$, and since $\cY$ is perpendicular to $H$, the path $\pi(T')$ is disjoint from the disk $\pi(\cY)$, which is a disk of radius $\sigma/2$ in $H$ centered at the origin. Notice that $\pi(\bar{e}^a)$ form the three non-adjacent sides of a regular hexagon in $H$ centered at the origin, see Figure~\ref{fig:cylinder}(i).

\begin{figure}
\centering
\includegraphics[scale=0.9]{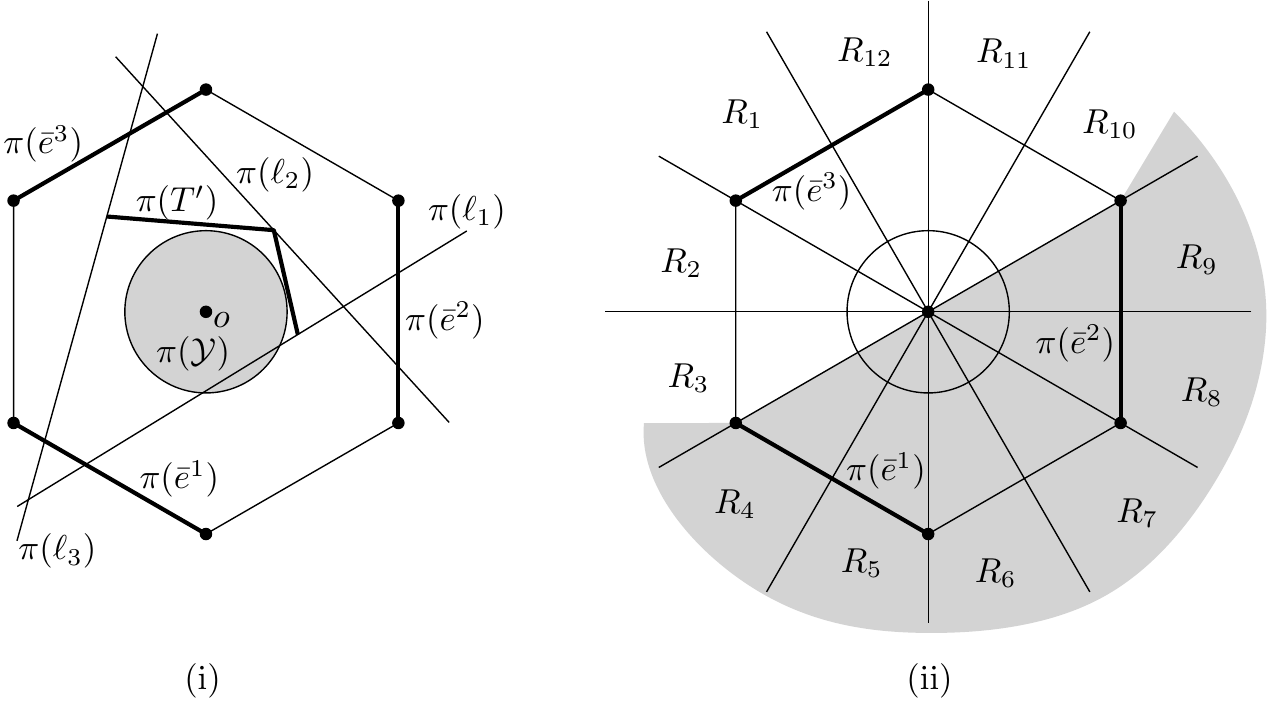}
\caption{(i) Projection into the plane $H:x+y+z=0$. (ii) Twelve cones, eight of which covers the line $\pi(\ell_1)$.}\label{fig:cylinder}
\end{figure}

Let $R_1,\dots,R_{12}$ be the twelve cones centered at the origin whose boundary contain a midpoint and an endpoint of a hexagon side in cyclic order. We can define the indices so that $\pi(\bar{e}^a)\subset R_{4a}\cup R_{4a+1}$, see Figure~\ref{fig:cylinder}(ii). For any choice of $\ell_1$ we have that $\pi(\ell_1)$ intersects both $\pi(\bar{e}^1)$ and $\pi(\bar{e}^2)$, therefore it is covered by the eight regions $R_3\cup R_4\cup\dots\cup R_{10}$. In general $\pi(\ell_j)$ is covered by $R_{4j-1}\cup R_{4j}\cup\dots\cup R_{4j+6}$. We claim that $\pi(T')$ intersects at least two non-adjacent regions among $R_i\setminus \pi(\cY)\, (i=1,\dots,12)$. Indeed, if $\pi(T')\subset R_i\cup R_{i+1}\setminus \pi(\cY)$ for some $i\in \{1,\dots,12\}$, then $\pi(T')$ is disjoint from at least one of $R_{4j-1}\cup R_{4j}\cup\dots\cup R_{4j+6}\, (j=1,2,3)$, so it cannot touch $\pi(\ell_j)$. Consequently, $T'$ does not touch $\ell_j$, which contradicts the definition of $T'$. Therefore $\pi(T')$ intersects at least two non-adjacent regions.

The distance of two non-adjacent regions $R_i\setminus \pi(\cY)$ and $R_j\setminus \pi(\cY)$ is at least $\sigma/4$, which implies that $\cost(T')>\sigma/4$. Similarly to the calculation seen above for the cylinder, we get $\cost(T)>10 + \sigma/4 - o(1)$, which is a contradiction.
\end{proof}

By \Cref{lem:cylinder}, we may assume without loss of generality that $u^j_1,\dots,u^j_{t(j)}\in \bar{e}^1$. Note that whenever $u^j_i v^j_i$ and $u^j_{i+1}v^j_{i+1}$ correspond to non-incident edges, (equivalently, when $u^j_i\neq u^j_{i+1}$) then we have that $\dist(\ell(u^j_i,v^j_i),\ell(u^j_{i+1},v^j_{i+1}))>1/(200n)$ by \Cref{cor:mindist}, therefore $\cost(T(g^j_i,g^j_{i+1}))> 1/(200n)=20\delta$. If there are $s$ unique points in the sequence $u_1,\dots,u_t$, then $\cost(T(g^j_1,g^j_{t(j)}))> 20(s-1)\delta$.  We add each $u^j_i$ into a set $W$. We execute the same procedure on each sequence $g^j$. We claim that the resulting set $W$ is a vertex cover of size at most $1.011k$.

The set $W$ is a vertex cover as each line $\bar{\ell}(vw)$ is visited by $T$, therefore $v$ or $w$ appears in the sequence $u^j_1,\dots,u^j_{t(j)}$ for some subinterval $g^j$ of $h$ and therefore $v$ or $w$ gets added to $W$.

It remains to prove the bound on the size of $W$.
We give a lower bound on $\cost(T(g^j_0,g^{j+1}_0))$. If the sequence $g^j$ has a single entry (that is, $t(j)=0$), then $\cost(T(g^j_0,g^{j+1}_0)\geq \delta -2/n^3$ by \Cref{cl:shortstep}. Otherwise, by \Cref{cl:qdist} we have that $T(g^j_0,g^j_1)$ and $T(g^j_{t(j)},g^{j+1}_0)$ both has cost at least $9.9\delta$, and if there are $s_j$ unique vertices contributed from $u^j$ to $W$, then by the arguments above $\cost(T(g^j_1,g^j_{t(j)}))> 20(s_j-1)\delta$. Therefore $\cost(T(g^j_0,g^{j+1}_0)) > 20(s_j-1)\delta+ 2\cdot9.9\delta \geq 18.8s_j\delta + \delta$. Putting the case $s_j=0$ and $s_j>0$ together, we get that $\cost(T(g^j_0,g^{j+1}_0))> 18.8s_j\delta + \delta -2/n^3$.
Consequently, $\cost(T)>\sum_{j=1}^{|Q|}(18.8s_j\delta +\delta-2/n^3) \geq 10 + |W|\cdot18.8\delta-O(1/n^2)$. Since $\cost(T)=10+19\delta k$, we have that $|W|\leq \frac{19+O(\frac{1}{\delta n^2})}{18.8}k=\frac{19+O(1/n)}{18.8}k$, so for $n$ large enough we have $|W|<1.011k$.
\end{proof}

\section{Detailed Proofs from Section~\ref{sec:hardness-line}}

\recallthm{lem:distance_point_line}

\begin{proof}
  We distinguish two cases: (i) when $\ell$ is incident to $\ell_{uw}$
  and (ii) when $\ell$ is not incident to $\ell_{uw}$.

  For the first case, assume without loss of generality that $\ell = \ell_{uz}$. Then, the distance
  between $p$ and $\ell$ is given by
  \begin{align*}
    \lvert up\rvert\cdot \sin(\an(\ell,\ell_{uw})).
  \end{align*}
  Since  $\lvert up\rvert \ge \Delta$, it suffices to show that
  $\sin(\an(\ell,\ell_{uw}))\ge 1/2\Leftrightarrow \an(\ell,\ell_{uw})
  \ge \pi/6$. 

  Consider the ``almost equilateral'' triangle $uwz$. By applying the
  law of cosines, we have
  \begin{align*}
    \lvert wz \rvert^2 &= \lvert uw \rvert^2 + \lvert uz \rvert^2 -
    2\lvert uw \rvert\lvert uz \rvert\cos \an(\ell,\ell_{uw})\\
    &\Rightarrow
    \cos \an(\ell,\ell_{uw}) \le \frac{2(1+\delta)^2-1}{2}.
  \end{align*}
  We note that for $\delta\le \sqrt{(\sqrt{3}+1)/2} - 1$ we have
  \begin{align*}
    \sin\left(\an(\ell,\ell_{uw})\right) =
    \sin\left(\arccos\left((1+\delta)^2-\frac12\right)\right) \ge
    \sin\left(\arccos\left(\frac{\sqrt{3}}{2}\right)\right) = \frac{1}{2}
  \end{align*}

  For the second case, assume that $\ell = \ell_{zv}$. We note that
  $zv$ and $uw$ are non-incident edges of an ``almost regular''
  tetrahedron (formed by the points $\{z,v,u,w\}$). Since the distance of non-incident edges of a regular
  tetrahedron of edge-length $1$ is $1/\sqrt{2}$ and the edge lengths
  in our tetrahedron are in the range $[1,1+\delta]$, there exists
  a small enough $\delta$ (which is independent of $\epsilon$) such
  that $\dist(p,\ell)\ge 1/2 > \Delta/2$.

\end{proof}

\recallthm{lem:covering_at_balls}

\begin{proof}
  Note it is without loss of generality to assume that $T$ consists of
  line segments with endpoints on lines of $I(G')$. By
  Lemma~\ref{lem:distance_point_line} any line
  $\ell_{u^iw^j}\in\mathcal{L}_{uw}$ that is visited at a point $p$
  with $p\not\in B(u^i)$ and $p\not\in B(w^j)$, must have two adjacent
  segments on $T$ of length at least $\Delta/2$ each. Since the total
  tour cost  of $T$ is at most $x$ there
  can be at most 
  $  \frac{x}{\Delta/2} =
    \frac{2 x}{\Delta}$
  lines that are visited by $T$ outside a ball.
\end{proof}

\recallthm{lemma:non-empty-balls}

\begin{proof}
We first argue that $S$ is a vertex cover of $G$. Assume for
the sake of contradiction that some edge $uv\in E(G)$ is not covered
by $S$. Then it must be the case that there are at least $\alpha(1-\lambda)$ empty balls among the balls corresponding to both $u$ and $v$. But any
line defined by two such empty balls corresponding to $u$ and $v$ 
is not covered by a ball. In total there are more than
$(1-\lambda)^2\alpha^2=\Omega(n^4)$ many
such lines. This is a contradiction,
since by Lemma~\ref{lem:covering_at_balls} there can be at most
$\frac{2n \alpha k (1+\delta)}{\Delta}=O(n^3)$ such lines in total over the
whole instance.

Let $S$ be the vertex cover of $G$ we have obtained. Since the dsitance of any two balls is at least $1-2\Delta$, and we have visited at least $\alpha\lambda$ balls among $Q_v$ for each $v\in S$,
the total cost of the tour is at least
\[x>|S|\alpha\lambda(1-2\Delta),\]
therefore we have that $|S|<\frac{x}{\alpha \lambda (1-2\Delta)}$.
\end{proof}


\section{Details of Section~\ref{sec:super}}

\subsection{Arora's Algorithm}
\label{sec:app:super:arora}

Arora's algorithm consists of three main steps:
\begin{enumerate}[noitemsep]
\item Perturbation, which changes the instance so that all coordinates are integral and bounded by $O(n)$;
\item Construction of a shifted quadtree;
\item Dynamic program, which finds the approximate solution for TSP.
\end{enumerate}

We describe all of these steps, including any minor alterations needed for them to work in our setting.

\subsubsection{Perturbation}
\label{sec:super:arora:1}

 Arora shows how to perturb the solution such that:
\begin{enumerate}[noitemsep]
  \item All nodes have integer coordinates;
  \item Every (non-zero) distance between two points is at least $8$ units;
  \item The maximum distance between two points is $O(n)$.
\end{enumerate}

Given a bounding box on the instance of size $L_0$, Arora achieves this
perturbation by snapping points to an appropriately fine grid. %
To use this step for our problem, we need to specify a value of $L_0$ such that 
${\OPT \leq L_0 \leq O(\OPT)}$. %
To this effect, we guess the value of $\OPT$ rounded up to a power of $2$, as
well as a vertex $v_0$ that is included in an optimum solution. 
We implement this guessing step by iterating over all of the possible values, and computing a feasible solution for each possibility. %
The best feasible solution we obtain will be at least as good as the solution
for the correct guess (in expectation).

The guessing step is done as follows. 
We start by {guessing} a vertex $v_0$ that is contained in an optimum
solution. %
Then, we compute the minimum radius $R_0$ such that at least one point from each
neighborhood is contained in the ball $B$ of radius $R_0$ centered at $v_0$. %
Such a ball can be computed simply by iterating over all neighborhoods and
finding the neighborhood's nearest point to $v_0$. %
If the optimum solution contains $v_0$, then its cost is at least $R_0$, as it
must visit the farthest neighborhood, at distance $R_0$. %
On the other hand, $\OPT \leq 2R_0n$, since the ball $B$ contains at least one
point from each neighborhood, and the distance between any two points in $B$
is at most $2R_0$. %
Hence, there is a tour of cost at most $2R_0n$. %
Knowing that $R_0 \leq \OPT \leq 2R_0n$ (assuming $v_0$ is in an optimum solution),
we can simply run the algorithm for every $v_0$ and for any $R \in
[R_0, 4R_0n]$ that is a power of $2$. %

Given a vertex $v_0$ and a guess $R$ for the value of the optimum solution, we
set $L_0 = R/2$ (so that if $R/2 \leq \OPT \leq R$, $L_0 \leq \OPT$). %
Finally, we remove all of the vertices $u \in P$ that are at a distance more
than $R$ from $v_0$, that is, $\dist(v_0, u) > R$. %
A solution containing both $v_0$ and $u$ would cost more than $R\geq \OPT$,
implying that for correct choices of $R$ and $v_0$, such vertices can be safely
removed. We now have a bounding box of {side length} $4L_0$ {containing} all the points in
the instance, and hence the perturbation step in Arora's algorithm ensures the
stated properties.

\subsubsection{Construction of a shifted quadtree}
\label{sec:super:arora:2}

Let $L = O(n)$ be the size of the bounding box. %
The algorithm computes a random shift $a = (a_1, a_2, \ldots, a_d)$, with $a_i \in \{0,
\ldots, L-1\}$, $i \in [d]$. %
Then, it constructs a quadtree where the dissection points are shifted
according to $a$. %
The resulting quadtree has height $O(\log n)$, and $O(n\log n)$ cells.
For our purpose, no changes are needed to this process.

\subsubsection{Dynamic Program}
\label{sec:super:arora:3}

Arora's algorithm uses dynamic programming to find a \emph{salesman path},
which may visit additional points along the boundary of the cells of the
quadtree. The following definition formalizes this concept.

\begin{definition}
Let $m$, $r$ be positive integers. An \emph{$m$-regular set of portals} for a
shifted dissection is a set of points on the facets of the cells in it. Each
cell has a portal at each of its vertices and $m$ other portals on each facet,
placed in a $d-1$-dimensional square grid whose vertices are identical to the vertices of the facet.

A \emph{salesman path} is a path in $\Reals^d$ that visits all the input
points, and some subset of portals. %
It may visit a portal more than once.

The salesman path is \emph{$(m, r)$-light} with respect to the shifted dissection if it
crosses each facet of each cell in the dissection at most $r$ times and always at
a portal.
\end{definition}

The goal of the dynamic program is to find a minimum cost $(m,r)$-light
salesman path, for the instance. %
For our purpose, a $2$-approximation of \TSP is sufficient, and hence we set 
${m = O(\sqrt d \log n)^{d-1}}$ and ${r = O(\sqrt d)^{d-1}}$. %
By restricting the solution to cross the cell boundaries only through portals,
we can see that any solution to the problem, restricted to a single cell,
consists of a set of paths that together cover all of the points inside the
cell. %
Since we want to find an $(m,r)$-light solution, this further implies that at
most $r$ portals per facet of the cell are used. %
This motivates the definition of the \emph{$(m,r)$-multipath problem}, which is the problem solved by the dynamic program for each cell:

\begin{definition}[$(m,r)$-multipath problem \cite{Arora98}]
\label{superpoly:multipath}
An instance of this problem is specified by the following inputs:
\begin{enumerate}[noitemsep]
  \item A nonempty cell in the quadtree.
  \item A multiset of $r$ portals on each of the $2d$ facets of this cell
  such that the sum of the sizes of these multisets is an even number
  $2p\leq2dr$. \label{superpoly:multipath:2}
  \item A pairing $(a_1, a_2), (a_3, a_4), \ldots (a_{2p-1}, a_{2p})$
  between the $2p$ portals specified in \Cref{superpoly:multipath:2}.
\end{enumerate}
The goal in the $(m, r)$-multipath problem is to find a minimum cost
collection of $p$ paths in the cell that is $(m, r)$-light. %
The $i$-th path connects $a_{2i-1}$ to $a_{2i}$, and
the $p$ paths together visit all the points in the cell.
\end{definition}

The dynamic programming table consists of all of these instances of
$(m,r)$-multipath problem, for each cell and pairing of portals %
(considered here to include the multiset of portals in Item
\ref{superpoly:multipath:2}. %
We refer to the multiset of portals and their pairing as the \emph{state} of an $(m,r)$-multipath problem.

The values of the table can be computed recursively. %
The entries corresponding to leaves of the quadtree can be easily determined:
given the portal set of size $2p$ and the pairing, we simply need to find the
shortest paths between the paired portals, and add the (single) point in the
cell to one of these paths. %
For all other entries, the algorithm enumerates all possible ways that the $p$
paths can cross the boundary between children cells. %
For each of these arrangements, the cost of the solution can be obtained by
summing the costs of the respective instances for the children cells. %
Once all of the entries have been computed, the minimum cost $(m,r)$-light
salesman path can be found by looking at the $(m,r)$-multipath problem for the
root cell of the quadtree with no portals used. %

The dynamic programming table contains a total of $O\paren[\big]{n (\log
n)^{O({d})^{(d-1)/2}}}$ entries, %
and the value at each cell can be computed in time $(\log n)^{O({d})^{(d-1)/2}}$. %
Therefore, the running time of this algorithm is $O\paren[\big]{n (\log n)^{O({d})^{(d-1)/2}}}$.

Our algorithm uses a very similar dynamic program, with only a small change needed at the leaves. %
In the \TSP problem, all of the points must be visited, which implies that any
feasible solution to the \mbox{$(m,r)$-multipath} problem must visit all the points
contained in that cell. %
However, the same is not true of the \TSPN problem: %
as long as one point from each neighborhood is visited in the whole path, the
solution is feasible, which means that not all neighborhoods are visited in
every cell that intersects them. %
To that effect, we add an extra input to the $(m,r)$-multipath problem for
leaf cells, which we call \emph{visit bit}. %
If the visit bit is set to \strue, then the (single) point in the cell
must be visited; if it is set to \sfalse, then the solution only needs
to connect the portals as specified in the input (meaning that the optimum
solution will be a union of shortest paths between paired portals).

\subsection{Approximating \TSPN using the framework by Chalermsook et al.}
\label{sec:app:super:approx}

To prove \Cref{thm:super:thm}, we need to show how to formulate \TSPN as an
instance of \STGST, and then show how to use \Cref{thm:super:approx:round} to
obtain an $O(\log^2 n)$-approximation. %
In this section, we provide details to both of these steps.

\subsubsection{Formulating discrete \TSPN as an instance of \STGST}
\label{sec:super:approx:1}

We will formally describe the construction of a  DAG $H$ based on the dynamic
program for \TSP. %
We assume that the perturbation and random shift steps implemented by Arora
have been performed, with the alterations described in \Cref{sec:super:arora}.

We now consider the dynamic program as presented by Arora, and construct
our DAG $H$ as follows. %
The vertex set is partitioned into \emph{subproblem nodes} $H_p$ and \emph{combination nodes} $H_c$.
\begin{itemize}[noitemsep]
  \item For every $(m,r)$-multipath subpro{}blem considered by Arora, we create a \emph{subproblem node}. %
  Formally, for every cell $C$ in the quadtree, and every state $A$ or
  $(A,b)$ (where $b$ represents the visit bit if $C$ is a leaf cell), we
  create a node $t[C, A]$ (resp. $t[C,A,b]$) in $H_p$.
  \item For every non-leaf cell $C$ with children $C_1, \ldots, C_k$ and states $X$ for $C$ and $X_i$ for $C_i$, we add a \emph{combination node} $t^c[C, X, \{X_i\}_{i \in [k]}]$ if the states are consistent, that is, if the combination of the portal pairings for each of the cells $C_i$ forms the portal pairing represented by $X$ in $C$.
  \item For each combination node $t' = t^c[C, X, \{X_i\}_{i \in [k]}]$, we add edges from $t[C,X]$ to $t'$ and from $t'$ to $t[C_i, X_i]$ for each $i \in [k]$.
  \item The edges entering leaf nodes $t[C,A,b]$ have cost equal to the minimum cost of a solution to the $(m,r)$-multipath problem in $C$ with portal pairings specified by $A$, and which visits the point in $C$ if $b = \strue$. 
  \item All other edges have cost $0$.
\end{itemize}
The root of $H$ is the node $t[C,X]$, where $C$ is the root cell of the quadtree (the bounding box of the instance), and $X$ represents an empty set of portals. %

\begin{lemma}
\label{lem:super:approx:equiv}
Let $v_0 \in P$ be a point and $R_0$ be a radius {guessed} in
\Cref{sec:super:arora:1}.

For every $(m,r)$-light tour $F$ in the resulting quadtree
{there is a solution tree $X$ in $H$ such that $\cost(F)=\cost(X)$ and they visit the same set of points in $P$.}

Similarly, for any solution tree $X \subset H$, there is an $(m,r)$-light tour
$F$ of the same cost, which visits the same points in $P$.
\end{lemma}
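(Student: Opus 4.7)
The proof is a structural induction on the shifted quadtree that witnesses a two-way translation between $(m,r)$-light tours in $\Reals^d$ and solution trees of $H$.

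\emph{Tour to tree.} Given an $(m,r)$-light tour $F$, for every quadtree cell $C$ I consider $F\cap C$. Since $F$ only crosses facets at portals and at most $r$ times per facet, $F\cap C$ is a disjoint collection of paths whose endpoints are portals on the boundary of $C$, and this induces a multiset $A$ of used portals together with a natural pairing (two portals are paired iff they are the endpoints of the same path). For a leaf cell I additionally record the visit bit $b\in\{\strue,\sfalse\}$ according to whether $F$ contains the unique point in $C$. Place the subproblem node $t[C,A]$ (respectively $t[C,A,b]$) in $X$; for every non-leaf cell also place the unique combination node whose children states agree with the pairings induced by $F$ in the children cells, and include all edges of $H$ among the chosen vertices. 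I verify that $X$ is an out-arborescence rooted at $t[C_0,\emptyset]$ (where $C_0$ is the root cell, whose state is empty because $F$ is a closed tour and does not cross the root boundary): combination nodes inherit all children by construction, and each non-leaf subproblem node has out-degree one because the induced children states are uniquely determined by $F$, matching Definition~\ref{def:gst:btw:soltree}. Since each leaf edge of $H$ carries the minimum multipath cost for its state, and $F\cap C$ is a feasible solution to that instance with the same state, summing over leaves gives $\cost(X)\le \cost(F)$, and the visited points of $F$ are exactly the points in leaves whose visit bit was set to $\strue$. Applying the reverse direction to this $X$ produces an $(m,r)$-light tour $F^\star$ with $\cost(F^\star)=\cost(X)\le\cost(F)$ visiting the same points, so $F$ may be replaced by the locally optimal $F^\star$ to obtain equality as claimed.

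\emph{Tree to tour.} Given a solution tree $X\subseteq H$, for each leaf subproblem node $t[C,A,b]$ in $X$ I select an optimum solution to the corresponding $(m,r)$-multipath instance (which visits the single point in $C$ iff $b=\strue$). The structural property that every non-leaf subproblem node in $X$ has out-degree one to a combination node, whose children in $X$ carry states consistent with it by the construction of $H$, ensures that the portal multisets and pairings recorded on the two sides of every internal facet agree. Gluing the leaf multipath solutions along these matching portals yields a collection of curves that visits exactly the required points. A bottom-up induction along the quadtree shows that in every cell $C$ the glued partial solution realizes exactly the pairing encoded in the state of $t[C,A]$ selected in $X$; applied to the root with empty portal set, this produces a single closed $(m,r)$-light curve $F$. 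Its cost equals the sum of the leaf edge costs of $X$, which is $\cost(X)$ because all non-leaf edges have cost zero.

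\emph{Main obstacle.} The technical heart lies in the bottom-up induction of the tree-to-tour direction: proving that the glued collection of leaf multipath solutions assembles into a genuine closed tour, rather than a disconnected union of arcs, and that the $(m,r)$-light property is preserved globally. This mirrors the correctness argument of Arora's dynamic program, adapted to account for the extra visit bit at leaves that allows leaf subproblems to skip their interior point. A minor subtlety is that the construction implicitly depends on the guess of $v_0$ and $R_0$ from Section~\ref{sec:super:arora:1}; the correspondence holds verbatim for any guess whose bounding box contains $F$, and iterating over all polynomially many guesses ensures that the correct one is considered.
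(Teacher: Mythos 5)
Your high-level structure matches the paper's: read off portal multisets, pairings, and visit bits from $F\cap C$ cell by cell to build a solution tree $X$, and in the reverse direction glue the optimal leaf multipath solutions chosen by $X$. A minor point: in the tour-to-tree direction you correctly observe that the argument only delivers $\cost(X)\le\cost(F)$ rather than equality, since the leaf edges of $H$ carry the \emph{optimal} multipath costs for their state, which $F$ may exceed. Your proposed patch---run $X$ through the reverse direction to get $F^\star$ and ``replace $F$ by $F^\star$''---does not establish the lemma as worded, because it quantifies over arbitrary $F$, not over tours that happen to be locally optimal in every leaf. In fact the paper's own proof also stops at $\cost(X)\le\cost(F)$; the inequality is all that is used downstream (bounding the optimal solution-tree cost by $\OPT$), so this is a cosmetic tension shared by both proofs.

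The substantive gap is in the tree-to-tour direction, and you yourself flag it as the ``main obstacle'' without resolving it. You assert that a bottom-up induction shows the glued leaf solutions realize the pairing encoded at every internal cell and therefore assemble into a single closed $(m,r)$-light curve, but that is precisely the claim that needs proof. The paper supplies an argument in two steps: (i) \emph{even degree at each portal} $q$ used by $F$, established by taking the smallest cell $C$ that contains $q$ in its interior boundary structure but does not have $q$ as one of its own portals---then $q$ lies on a facet shared by two or more children of $C$, and the consistency encoded in $C$'s chosen combination node forces the incidences to match up in pairs; and (ii) \emph{connectivity}, established by contradiction: if the union were a disjoint union of circuits, take the smallest cell $C$ witnessing more than one component, and argue from the permitted pairings of its children (again via the combination node) that this cannot occur. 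Both steps depend crucially on how the combination nodes encode Arora's recursion rules, including the handling of the visit bit at leaves, and neither is routine. Until your induction is instantiated along these lines, the claim that $F$ is a single closed tour rather than a disconnected union of arcs and cycles is unjustified.
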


\begin{proof}
Given an $(m,r)$-light tour $F$, we can define a solution tree $X$ by choosing
its subproblem nodes. %
We show that if the subproblem nodes are chosen consistently,
then the corresponding combination nodes and the edges connecting all of the subproblem nodes exist in the graph, and thus we obtain a solution tree. %
For every non-leaf cell $C$, we choose the subproblem $t[C,A]$ such that $A$
describes the portals used by $F$ on the boundary of $C$, as well as how they
are pairing. %
For every leaf cell $C$, we choose the subproblem $t[C,A,b]$ such that $A$ is
as described above and $b$ is set to $\strue$ if the point in $C$ is visited
by $F$ and $\sfalse$ otherwise. %
By definition, the portal pairings of a non-leaf cell $C$ are consistent with
the portal pairings of the children cells (as they describe the portals used
by $F$). %
Therefore, there is a combination node for every node $t[C,A]$ above,
connecting it to the nodes corresponding to the children cells.

As to the cost, notice that the cost of $F$ equals the sum of the costs of $F$
on each of the leaf cells. %
Similarly, the cost of a solution tree $X$ is the sum of the costs of the
edges entering leaf nodes $t[C,A,b]$. %
Since $F$ restricted to a leaf cell $C$ is a solution to the $(m,r)$-multipath
subproblem encoded by $t[C,A,b]$, the cost of $F$ restricted to $C$ must be at
least as much as the optimum to that subproblem, which is the cost of the edge
entering $t[C,A,b]$. %
Taking into account that all other edges have cost $0$, and that each leaf
node $t[C,A,b]$ of the solution tree corresponds to a part of $F$ whose cost
is at least as much as the cost of the edge entering $t[C,A,b]$, we conclude
that $\cost(X) \leq \cost(F)$.

For the second part of the proof, let $X$ be a given solution tree. %
We construct $F$ by taking the union of the optimum solutions to the
$(m,r)$-multipath subproblems corresponding to the leaf nodes $t[C, A, b]$ of
solution tree $X$. %
The cost of $X$ is the sum of the costs of edges entering such leaf nodes,
each of which is the optimum cost of the corresponding $(m,r)$-multipath
subproblem, which is the cost of that part of $F$. %
Since all cell leaves are disjoint, we conclude that $\cost(F) = \cost(X)$. %

To complete the proof, we need to show that $F$ is a circuit. %
We start by showing that any path between two portals must have a
continuation, that is, the number of paths incident on each portal is even. %
Then we show that the solution must be connected, and thus forms a single
circuit. %
Let $q$ be a portal contained in $F$, and $C$ be the smallest cell such that $q$ is contained
in $C$ but is not one of its portals. %
$C$ must exist, since no portal of the cell corresponding to the bounding box
can be used. %
By minimality of $C$, two or more children cells of $C$ contain $q$. %
As part of the recursion rules of Arora's dynamic program, each solution to an
$(m,r)$-multipath subproblem must contain paths between portals of $C$ or a
single circuit. %
Therefore, the degree at $q$ must be even. %

Similarly, the solution cannot be the disjoint union of multiple circuits. %
Assume otherwise. %
Then, there is a smallest cell $C$ containing two circuits or a circuit and
some other paths. %
By minimality of $C$, either one of the children cells contains a circuit of
$F$, and the recursion rules of Arora's dynamic program prevent any other
child cell from containing a part of the solution, or the portal pairings
themselves induce two circuits or a circuit and some other paths, which would
not be permitted by Arora's algorithm.
\end{proof}

\subsubsection{Obtaining an \texorpdfstring{$O(\log^2n)$}{O(log2 n)}-approximation}

We will now show how to use \Cref{thm:super:approx:round} and
\Cref{lem:super:approx:equiv} to obtain an $O(\log N \log n)$-approximation
for the \TSPN problem on discrete neighborhoods, and hence prove
\Cref{thm:super:thm}.

We start by guessing a vertex $v_0$ to be the starting point of our solution. %
For every vertex $v_0 \in P$, we compute the minimum radius $R_0$ such that every neighborhood contains a point at distance at most $R_0$ from $v_0$. %
Next, we guess $R$, an approximation for $\OPT$, in the range $[R_0, 4nR_0]$. %
For the powers ${R=2^i}$, $i\in \mathbb{Z}, R_0\leq R \leq 4nR_0$, we can now preprocess the
instance according to the perturbation step of Arora's algorithm.
(\Cref{sec:super:arora:1}). %
Next, we enumerate the shift $a = (a_1, \ldots, a_d) \in \set{0, \ldots,
L-1}^d$, and construct the shifted tree as in Arora's algorithm
(\Cref{sec:super:arora:2}). %
Finally, we construct the DAG $H$ based on the dynamic programming table, as
specified in \Cref{sec:super:approx:1}. %
We recall that the height of the tree, as well as of DAG $H$ is $O(\log N)$.

We now use \Cref{thm:super:approx:round} repeatedly to obtain solution trees
$X_1, \ldots, X_\ell$, where $\ell = c \log n \log N$, and $c$ is a large
constant. %
Then, we use \Cref{lem:super:approx:equiv} to convert each solution tree $X_i$
into a tour $F_i$, and finally take the union of all these tours to obtain a
solution $F$. %
While $F$ is not necessarily a tour, it is simple enough to remove crossings. %
For every neighborhood $P_i$ that is not visited by $F$, we add a detour
visiting the closest point in $P_i$. %
We denote by $F^*$ the minimum-cost solution among all solutions $F$ for all
the enumerated values of $v_0$, $R_0$, and $a$.

By construction, $F^*$ is a feasible solution, as it is a tour that visits every group. %
To prove that it is $O(\log N \log n)$-approximate, consider the solution $F'$ that we
obtained for the correct values of $v_0$, $R_0$, and $a$, that is, for a vertex
$v_0$ in an optimum solution, $R_0$ such that $R_0/2 \leq \OPT \leq R_0$, and a shift
$a$ for which an $(m,r)$-light tour exists. %
By \Cref{thm:super:approx:round}, each of the solution trees $X'_i$ obtained
has expected cost at most $\OPT$, and by \Cref{lem:super:approx:equiv}, the
corresponding tour $F'_i$ also has expected cost at most $\OPT$. %
Therefore, the union of all tours $F'_i$ costs at most $O(\log N \log n
\OPT)$ in expectation. %
The probability that a neighborhood is not visited, and hence that we must add
a detour, is (for sufficiently large {$c$})
\begin{align*}
\Prob\brac*{\bigcap_j |S_i \cap X_j| = 0}
&\leq \left(1-\frac1{\alpha\height(H)}\right)^\ell \\
&\leq e^{-O(\log n)} \\
&\leq \frac1{n^3}
\end{align*} 
We conclude that the expected cost of $F'$ is at most $O(\log N \log n
\OPT)$, and since, by \Cref{lem:super:approx:equiv}, ${\cost(F^*) \leq \cost(F')}$, $F^*$ is $O(\log N \log n)$-approximate in expectation. %
By \Cref{thm:super:approx:round}, the running time of our algorithm is %
\[N^{O(d)}\, (\log N)^{O(d)^{(d-1)/2)}\, O(\log N)}= N^{O(d)^{(d-1)/2} \log \log N}.\] %

This completes the proof of \Cref{thm:super:thm}.

\end{document}